\newcommand{\ket}[1]{|#1\rangle}
\newcommand{\Endproof}{\hfill$\Box$\\}
\title{Very narrow quantum OBDDs and width hierarchies for classical OBDDs}
\author{Farid Ablayev$^{1,}$\thanks{Partially supported by RFBR Grant 14-01-06036.} \and Aida Gainutdinova$^{1,\star}$ \and Kamil Khadiev$^{1,}$\thanks{Partially supported by RFBR Grant 14-07-00557.} \and Abuzer Yakary{\i}lmaz$^{2,3,}$\thanks{Work done in part while visiting the Kazan Federal University. Partially supported by CAPES, ERC Advanced Grant MQC, and FP7 FET project QALGO.}}
\institute{Kazan Federal University, Kazan, Russia       
       \and
      University of Latvia, Faculty of Computing, Raina bulv. 19 Riga, LV-1586, Latvia 
      \and
      National Laboratory for Scientific Computing, Petr\'{o}polis, RJ, 25651-075, Brazil
      \\ \email{fablayev@gmail.com,aida.ksu@gmail.com,kamilhadi@gmail.com,abuzer@lncc.br} 
}
\begin{document}

\maketitle

\begin{abstract}
We present several results on comparative complexity  for different variants of OBDD models.
 \begin{itemize}
   \item We present some results on comparative complexity of classical and quantum OBDDs. We consider a partial function depending on parameter $k$ such that for any $k>0$  this function is computed by an exact quantum OBDD of width $2$ but any classical OBDD (deterministic or 
  stable bounded error probabilistic) needs width $2^{k+1}$.
   \item We consider quantum and classical nondeterminism. We show that quantum nondeterminism can be more efficient than classical one. In particular,  an explicit function  is presented which is computed by a quantum nondeterministic OBDD with constant width but any classical nondeterministic OBDD for this function needs non-constant width.
   \item We also present new hierarchies on widths of deterministic and nondeterministic OBDDs. We focus both on small and large widths.
 \end{itemize}
\end{abstract}




\section{Introduction}

Branching programs are one of the well known models of computation. These models have been shown useful in a variety of domains, such as hardware verification, model checking, and other CAD applications (see for example the book by Wegener \cite{Weg00}). It is known that the class of Boolean  functions computed by polynomial size branching programs are coincide with the class of functions computed by non-uniform log-space machines. Moreover branching program is a convenient model for  considering different (natural)  restrictive  variants and different complexity measures such as size (number of inner nodes), length, and width.

One of important restrictive branching programs is oblivious read once branching programs, also known in applied computer science as Ordered Binary Decision Diagrams (OBDD) \cite{Weg00}.  Since the length of an OBDD is at most linear (in the length of the input), the main complexity measure is ``width''.

OBDDs also can be seen as nonuniform automata (see for example \cite{ag05}).  During the last decades different variants of OBDDs were considered, i.e. deterministic, nondeterministic, probabilistic, and quantum, and
many results have been proved on comparative power of deterministic, nondeterministic, and randomized OBDDs \cite{Weg00}. For example, Ablayev and Karpinski \cite{ak96} presented the first function that is polynomially easy for  randomized and exponentially hard for deterministic and even nondeterministic OBDDs. More specifically, it was proven that OBDD variants of  $ \mathsf{coRP} $ and $\mathsf{NP}$ are different.

In the last decade quantum model of OBDD came into play \cite{agk01},\cite{nhk00},\cite{SS05}.  It was proven that quantum OBDDs can be exponentially cheaper than classical ones and it was shown that this bound is tight \cite{agkmp05}.

In this paper we present the first results on comparative complexity  for classical and quantum OBDDs computing partial functions. Then, we focus on the  width complexity of deterministic and nondeterministic OBDDs, which have been investigated in different papers (see for more information and citations \cite{hs00}, \cite{hs03}). Here we present very strict hierarchies for the classes of Boolean functions computed by deterministic and nondeterministic OBDDs.

The paper is organized as follows. The Section 2 contains the definitions and notations used in the paper. In Section 3, we compare classical and exact quantum OBDDs. We consider a partial function depending on parameter $k$ such that, for any $k>0$, this function is computed by an exact quantum OBDD of width $2$ but deterministic or bounded error probabilistic OBDDs need width $2^{k+1}$. Also it is easy to show that nondeterministic OBDD needs width $k+1$. In Section 4, we consider quantum and classical nondeterminism. We show that quantum nondeterministic OBDDs can be more efficient than their classical counterparts. We present  an explicit function which is computed by a quantum nondeterministic OBDD with constant width but any classical nondeterministic OBDD needs non-constant width. The Section 5 contains our results on hierarchies on the sublinear (5.1) and larger (5.2) widths of deterministic and nondeterministic OBDDs.

The proofs of lower bounds results (Theorem \ref{th-DOBDD1} and Lemma \ref{peq1s}) are based on Pigeonhole principle. The lower bound on Theorem \ref{th-POBDD1} uses the technique of Markov chains. 

\section{Preliminaries}

We refer  to \cite{Weg00} for more information on branching programs. The main model investigated throughout the paper is OBDD (Ordered Binary Decision Diagram), a restricted version of branching program.

In this paper we use following notations for vectors. We use subscript for enumerating elements of vector and strings and superscript for enumerating vectors and strings.  For a binary string $\nu$, $\#_1(\nu)$ and $\#_0(\nu)$ are the number of $1$'s and $0$'s in $\nu$, respectively. We denote $\#^k_{0}(\nu)$ and $\#^k_{1}(\nu)$ to be the numbers of $1$'s and $0$'s in the first $k$ elements of string $\nu$, respectively.

For a given $ n>0 $, a probabilistic OBDD $ P_n $ with width $d$, defined on $ \{0,1\}^n $, is a 4-tuple
$
	P_n=(T,v^0,Accept{\bf,\pi}),
$
where
\begin{itemize}
	\item $ T = \{ T_j : 1 \leq j \leq n \mbox{ and } T_j = (A_j(0),A_j(1))  \} $ is an ordered pairs of (left) stochastic matrices representing the transitions, where, at the $j$-th step, $ A_j(0) $ or $ A_j(1) $, determined by the corresponding input bit, is applied.
	\item $ v^0 $ is a zero-one initial stochastic vector (initial state of $P_n$).
	\item $ Accept \subseteq \{1,\ldots,d\} $ is the set of accepting nodes.
\item $ \pi $ is a permutation of $ \{1,\ldots,n\} $ defining the order of testing the input bits.
\end{itemize}

  For any given input $ \nu \in  \{0,1\}^n $, the computation of $P_n$ on $\nu$ can be traced by a stochastic vector which is initially $ v^0 $. In each step $j$, $1 \leq j \leq n$, the input bit $ x_{\pi(j)} $ is tested and then the corresponding stochastic operator is applied:
\[
	v^j = A_{j}(x_{\pi(j)}) v^{j-1},
\]
where  $ v^j $ represent the probability distribution vector of nodes after the $ j$-th steps, $ 1 \leq j \leq n $.
The accepting probability of $ P_n $ on $ \nu $ is 
\[
	\sum_{i \in Accept} v^n_i.
\]

We say that a function $f$ is computed by $ P_n $ with bounded error if there exists an $ \varepsilon \in (0,\frac{1}{2}] $ such that $ P_n $ accepts all inputs from $ f^{-1}(1) $ with a probability at least $ \frac{1}{2}+\varepsilon $ and $ P_n $ accepts all inputs from $ f^{-1}(0) $ with a probability at most $ \frac{1}{2}-\varepsilon $. We say that $P_n$ computes $f$ \textit{exactly} if $ \varepsilon =1/2 $. 

A deterministic OBDD is a probabilistic OBDD restricted to use only 0-1 transition matrices. In the other words, the system is always being in a single node and, from each node, there is exactly one outgoing transition for each tested input bit.

A nondeterministic OBDD (NOBDD) can have the ability of making more than one outgoing transition for each tested input bit from each node and so the program can follow more than one computational path and if one of the path ends with an accepting node, then the input is accepted (rejected, otherwise). 

\begin{itemize}
\item 

An OBDD is called {\em stable} if each transition set $T_j$ is identical for each level.
\item

An OBDD is called ID (ID-OBDD) if the input bits are tested in order $\pi=(1,2,\dots, n)$.
If a {\em stable}  ID-OBDD has a fixed width and transition rules for each $ n $, then it can be considered as a realtime finite automaton.

\end{itemize}

Quantum computation is a generalization of classical computation \cite{Wat09A}. Therefore, each quantum model can simulate its probabilistic counterparts. In some cases, on the other hand, the quantum models are defined in a restricted way, e.g., using only unitary operators during the computation followed by a single measurement at the end, and so they may not simulate their probabilistic counterparts. Quantum automata literature has a lot of such results such as \cite{KW97,AF98,BC01B}. A similar result was also given for OBDDs in \cite{SS05}, in which a function with a small size of deterministic OBDD was given but the quantum OBDD defined in a restricted way needs exponential size to solve this function.

Quantum OBDDs that defined with the general quantum operators, i.e. superoperator \cite{Wat03,Wat09A,YS11A}, followed by a measurement on the computational basis at the end can simulate its classical counterpart with the same size and width. So we can always follow that any quantum class contains its classical counterpart. 

In this paper, we follow our quantum results based stable ID-OBDDs, which are realtime quantum finite automata. But, we give the definition of quantum OBDDs for the completeness of the paper.

A quantum OBDD is the same as a probabilistic OBDD with the following modifications:
\begin{itemize}
	\item The state set is represented by a $ d $-dimensional Hilbert space over field of complex numbers. The initial one is $ \ket{\psi}_0=\ket{q_0}$ where $ q_0 $ corresponds to the initial node.
	 
	 \item Instead of a stochastic matrix, we apply a unitary matrix in each step. That is, $ T = \{ T_j : 1 \leq j \leq n \mbox{ and } T_j = (U_j^0,U_j^1)  \} $, where, at the $j$-th step, $ U_j^0 $ or $ U_j^1 $, determined by the corresponding input bit, is applied,
	 
	 \item At the end, we make a measurement on the computational basis.
\end{itemize}

The state of the system is updated as follows after the $ j$-th step:
\[
	\ket{\psi}_j = U_j^{x_{\pi(j)}} (\ket{\psi}_{j-1}),
\]
where $ \ket{\psi}_{j-1} $ and $ \ket{\psi}_j $ represent the state of the system after the $ (j-1)$-th  and $ j$-th steps, respectively, where $ 1 \leq j \leq n $.

The accepting probability of the quantum program on $ \nu $ is calculated from $\ket{\psi}_n=(z_1, \dots, z_d)$ as
\[
	\sum_{i \in Accept} |z_i|^2.
\]

\section{Exact Quantum OBDDs.}

In \cite{AY12}, Ambainis and Yakary{\i}lmaz defined a new family of unary promise problems: For any $ k>0 $, $ A^k = (A^k_{yes},A^k_{no}) $ such that $ A^k_{yes} = \{ a^{(2i)2^k} : i \geq 0 \} $ and $ A^k_{no} = \{ a^{(2i+1)2^k} : i \geq 0 \} $. They showed that each member of this family ($A^k$) can be solved exactly by a 2-state realtime quantum finite automaton (QFA) but any exact probabilistic finite automaton (PFA) needs at least $ 2^{k+1} $ states. Recently, Rashid and Yakary{\i}lmaz \cite{RY14A} showed that bounded-error realtime PFAs also needs at least $ 2^{k+1} $ states for solving $A^k$.\footnote{The same result is also followed for two-way nondeterministic finite automata by Geffert and Yakary{\i}lmaz \cite{GY14A}.} Based on this promise problem, we define a partial function:
\[
	\mathtt{PartialMOD^k_n} (\nu)  = \left\lbrace \begin{array}{lcll}
		1 & , & \mbox{if } \#_1(\nu) = 0 & (mod\textrm{ }2^{k+1}) \\
		0 & , & \mbox{if } \#_1(\nu) = 2^{k} & (mod\textrm{ }2^{k+1}) \\
		* & , & \multicolumn{2}{l}{\mbox{otherwise}} \\	
\end{array}	 \right. ,
\]
where the function is not defined for the inputs mapping to ``*''. We call the inputs where the function takes the value of 1 (0) as yes-instances (no-instances).

\begin{theorem}
	For any $k \geq 0$, $ \mathtt{PartialMOD^k_n} $ can be solved by a stable quantum ID-OBDD with width 2 exactly.
\end{theorem}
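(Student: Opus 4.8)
The plan is to give an explicit width-$2$ construction, exploiting the fact that a width-$2$ quantum OBDD is just a single qubit and that $\mathtt{PartialMOD^k_n}$ depends only on $\#_1(\nu)$. Indeed, the excerpt already notes that a stable ID-OBDD of fixed width is a realtime quantum finite automaton, and that the promise problem $A^k$ underlying this function is solved exactly by a $2$-state realtime QFA in \cite{AY12}; the construction below is precisely that automaton recast as an OBDD. First I would let the single qubit start in the basis state $\ket{q_0}$, declare $q_0$ to be the unique accepting node, and define the transitions so that reading a $0$ applies the identity while reading a $1$ applies the real rotation
\[
	U^1 = \begin{pmatrix} \cos\theta & -\sin\theta \\ \sin\theta & \cos\theta \end{pmatrix}, \qquad \theta = \frac{\pi}{2^{k+1}}.
\]
Since the program is stable and tests the bits in the order $1,2,\dots,n$, the same pair $(U^0,U^1)=(I,U^1)$ is used at every level, so this is a legitimate stable quantum ID-OBDD of width $2$.

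Next I would compute the final state. Because $0$-bits act trivially and every $1$-bit applies the same rotation (and planar rotations commute), after reading $\nu$ the qubit has been rotated from $\ket{q_0}$ by the total angle $\#_1(\nu)\cdot\theta$, giving the state $\big(\cos(\#_1(\nu)\,\theta),\ \sin(\#_1(\nu)\,\theta)\big)^{T}$. Measuring in the computational basis therefore accepts with probability $\cos^2\!\big(\#_1(\nu)\,\theta\big)$, and this expression is all that needs to be analyzed.

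Finally I would substitute the two promise conditions. On a yes-instance $\#_1(\nu)=j\,2^{k+1}$ for some integer $j$, so the total angle is $j\pi$ and the accepting probability is $\cos^2(j\pi)=1$; on a no-instance $\#_1(\nu)=2^{k}+j\,2^{k+1}$, so the total angle is $\tfrac{\pi}{2}+j\pi$ and the accepting probability is $\cos^2\!\big(\tfrac{\pi}{2}+j\pi\big)=0$. Thus every yes-instance is accepted with probability $1$ and every no-instance with probability $0$, matching the $\varepsilon=\tfrac12$ (exact) requirement.

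There is no real difficulty hidden here: the entire argument is a one-line state evaluation followed by a routine trigonometric check. The only genuine design decision, and the step I would flag as the crux, is choosing the angle $\theta=\pi/2^{k+1}$ so that a half-period of $2^{k}$ ones rotates $\ket{q_0}$ onto the orthogonal (fully rejecting) state while a full period of $2^{k+1}$ ones returns exactly to $\ket{q_0}$; once that angle is fixed, everything else follows mechanically.
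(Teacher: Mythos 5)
Your construction is correct and is essentially the same as the paper's, which simply cites the 2-state exact QFA for the promise problem $A^k$ from \cite{AY12} (a single-qubit rotation by $\pi/2^{k+1}$ per symbol) and notes that it can be recast as a stable quantum ID-OBDD of width $2$. You have merely written out explicitly the construction and the trigonometric verification that the paper leaves to the reference.
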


The OBDD can be construct by the same way as QFA, which solves promise problem $ A^k $ \cite{AY12}.

We show that the width of deterministic, or bounded-error stable probabilistic OBDDs that solve $ \mathtt{PartialMOD^k_n} $  cannot be less than $ 2^{k+1} $.

\begin{remark} 
 Note that, a proof for deterministic OBDD is not similar to the proof for automata because potentially nonstability can give profit. Also this proof is different from proofs for total functions (for example, $ MOD_p$) due to the existence of incomparable inputs.
Note that, classical one-way communication complexity techniques also fail for partial functions  (for example, it can be shown that communication complexity of $ \mathtt{PartialMOD^k_n} $ is $1$), and we need to use more careful analysis in the proof. 
\end{remark}

 Deterministic stable ID-OBDD with width $2^{k+1}$ for $ \mathtt{PartialMOD^k_n} $ can be easy constructed. We left open the case for bounded-error non-stable probabilistic OBDDs.

\begin{theorem}
\label{th-DOBDD1}
	For any $k\geq 0$, there are infinitely many $n$ such that any deterministic  OBDD computing the partial function $ \mathtt{PartialMOD^k_n} $ has width at least $2^{k+1}$.
\end{theorem}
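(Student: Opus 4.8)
The plan is to fix the variable ordering $\pi$ and a length $n$ that is a large multiple of $2^{k+1}$ (large enough that both sides of the cuts used below contain at least $2^{k+1}$ tested positions), and then to exhibit a single level of the OBDD at which at least $2^{k+1}$ nodes are reachable. Throughout I work with \emph{prefixes}: for a level $j$, a prefix is an assignment to the first $j$ tested variables $x_{\pi(1)},\dots,x_{\pi(j)}$, and since the program is deterministic each prefix drives the computation to a unique node at level $j$. The only quantity of a prefix that matters for $\mathtt{PartialMOD^k_n}$ is its number of ones modulo $2^{k+1}$, so I will speak of the \emph{residue} of a prefix. The main tool is a pigeonhole/fooling argument in which incompatible prefixes are completed by an all-ones suffix.

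The core step is a separation lemma: if two prefixes of the same length $j$ reach the same node at level $j$, and at least $2^{k+1}$ tested positions remain, then their residues cannot differ by exactly $2^k \pmod{2^{k+1}}$. To see this, suppose prefixes $p,p'$ have residues $a$ and $a' \equiv a + 2^k$ and reach a common node. Choosing the all-ones suffix $s$ of the (available) length $\ell$ with $a + \ell \equiv 0 \pmod{2^{k+1}}$, the input $ps$ satisfies $\#_1(ps) \equiv 0$ and is a yes-instance, while $p's$ satisfies $\#_1(p's) \equiv 2^k$ and is a no-instance. But $ps$ and $p's$ follow the same suffix transitions from the common node, hence reach the same final node and receive the same answer, contradicting that one must be accepted and the other rejected. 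The same completion also shows, conversely, that residues differing by anything other than $2^k$ are never forced apart at a single cut.

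This immediately gives width at least $2$, but, as the Remark warns, it is exactly here that the genuine difficulty lies: at a single cut each node may legitimately host up to $2^k$ residues (any set of residues no two of which differ by $2^k$), so single-cut subfunction counting, and likewise one-way communication arguments, stall at $2$. To amplify to $2^{k+1}$ I would track how the residue content of the nodes evolves across \emph{consecutive} levels under the deterministic transition maps: reading a $1$ shifts every residue by $+1$, reading a $0$ leaves it fixed, and the separation lemma must hold at every interior level simultaneously. The aim is to show that no assignment of at most $2^{k+1}-1$ nodes can be kept consistent with this $+1$ shift together with the final output constraint (residue $0$ accepts, residue $2^k$ rejects), for all arrangements of the ones rather than merely for one canonical family.

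The main obstacle is precisely the non-stability of the program. For a stable ID-OBDD the transition is identical at every level, so a repeated node along the all-ones computation would yield a pumping/periodicity argument forcing period $2^{k+1}$, exactly as in the realtime-automaton lower bound for $A^k$. With level-dependent transitions this shortcut disappears: reaching the same node at two different levels carries no information, and the residue-difference between two prefixes sharing a node is shift-invariant, so it never by itself produces a conflict. I therefore expect the decisive part of the proof to be an algebraic/global consistency argument, the natural generalization of the observation that a width-$2$ program computes only affine functions while $\mathtt{PartialMOD^k_n}$ restricted to its domain is not affine, showing that the composed transition maps cannot realize the cyclic $\mathbb{Z}_{2^{k+1}}$ structure demanded by the function with fewer than $2^{k+1}$ nodes. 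Making this rigorous for arbitrary (possibly non-injective) level transitions, while correctly ignoring the don't-care inputs, is the step I expect to require the most care.
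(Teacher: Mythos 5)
Your setup is sound and matches the paper's: the reduction to residues modulo $2^{k+1}$, the separation lemma that two prefixes reaching the same node cannot have residues differing by exactly $2^k$ (and that any other pair of distinct residues is never forced apart at a single cut, because the corresponding prefixes are incomparable), and the diagnosis that single-cut fooling-set counting therefore stalls at width $2$. But the proof stops exactly there. The entire content of the theorem is the amplification from $2$ to $2^{k+1}$, and for that you offer only an intention (``I would track how the residue content evolves across consecutive levels'') and a conjecture that some algebraic/affine-consistency argument will close the gap. That is a genuine missing step, not a routine one: as you yourself observe, non-stability kills the pumping argument, and nothing in your sketch explains why a width-$(2^{k+1}-1)$ non-stable program cannot maintain the single-cut constraint at every level.

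The paper closes this gap with an iterated pigeonhole over \emph{blocks} of levels, not an algebraic argument. Set $N=2^k$ and group the levels into $N$ consecutive blocks of $2N-1$ variables each (so $n\approx N(2N-1)$ suffices). Within each block one uses the canonical family of $2N$ suffixes $0^{2N-1-j}1^j$, $j=0,\dots,2N-1$, realizing every shift $\Delta\in\mathbb{Z}_{2N}$. If the width is below $2N$, pigeonhole gives two such suffixes, with distinct shifts $\gamma^1\neq\gamma^2$, that carry the current node to a common node; the invariant maintained is a set $S_i$ of at least $i+1$ distinct residues, pairwise not at distance $N$, all reaching one node at the end of block $i$. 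The induction step shows that $S_{i-1}+\gamma^1$ and $S_{i-1}+\gamma^1+\Delta$ (with $\Delta=\gamma^2-\gamma^1\neq 0$) cannot coincide as sets: if the shift by $\Delta$ permuted $S_{i-1}$, the $\Delta$-orbit of a point would close up after $r$ steps with $r\Delta\equiv 0\pmod{2N}$, and because $2N$ is a power of $2$ and $\Delta\neq N$, $r$ is even and the half-orbit point sits at distance exactly $N$ from the start --- contradicting the separation lemma. Hence the union gains at least one new residue per block, and after $N$ blocks one node hosts $N+1$ residues of $\mathbb{Z}_{2N}$, two of which must be at distance $N$, a contradiction. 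This block-wise growth argument, and in particular the use of the power-of-two structure of $\mathbb{Z}_{2^{k+1}}$ to rule out the permutation case, is the idea your proposal is missing; the affine-function heuristic you gesture at does not obviously generalize past width $2$.
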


\begin{proof}
Let $\nu\in\{0,1\}^n, \nu=\sigma\gamma$. We call  $\gamma$ valid for $\sigma$ if $\nu \in (\mathtt{PartialMOD^k_n})^{-1}(0) \cup (\mathtt{PartialMOD^k_n})^{-1}(1).$ We call two substrings $\sigma'$ and $\sigma''$ comparable  if for all $\gamma$ it holds that $\gamma$ is valid for $\sigma'$  iff $\gamma$ is valid for $\sigma''$.
We call two substrings $\sigma'$ and $\sigma''$ nonequivalent if they are comparable and there exists a valid substring $\gamma$ such that   $\mathtt{PartialMOD^k_n}(\sigma'\gamma)\neq\mathtt{PartialMOD^k_n}(\sigma''\gamma)$ .  

Let $P$ be a deterministic OBDD computing the partial function $ \mathtt{PartialMOD^k_n} $. Note that paths associated with nonequivalent strings must lead to different nodes. Otherwise, if $\sigma$ and $\sigma'$ are nonequivalent, there exists a valid string $\gamma$ such that $\mathtt{PartialMOD^k_n}(\sigma\gamma)\neq\mathtt{PartialMOD^k_n}(\sigma'\gamma)$ and computations on these inputs lead to the same final node.

Let $N=2^k$ and $\Gamma=\{\gamma: \gamma\in \{0,1\}^{2N-1}, \gamma=0\dots  0 1\dots 1\}$.
We will naturally identify any string $\nu$  with the element $a=\#_1(\nu) \pmod{2N}$ of additive group $\mathbb{Z}_{2N}$. We call two strings of the same length different if the numbers of ones by modulo $2N$ in them are different. We denote by $\rho(\gamma^1,\gamma^2)=\gamma^1-\gamma^2$ the distance between numbers $\gamma^1, \gamma^2.$ Note that $\rho(\gamma^1,\gamma^2)\neq \rho(\gamma^2,\gamma^1) $.

Let the width of $P$ is $t<2N.$ On each step $i$ ($i=1,2,\dots$) of the proof we will count the number of different strings,  which lead to the same node 
(denote this node $v_i$). On $i$-th step we consider $(2N-1)i$-th level of $P$.

Let $i=1$.
By pigeonhole principle there exist two different strings $\sigma^1$ and $ \sigma^2$ from the set $\Gamma$ such that corresponding  paths lead to the same node $v_1$ of the $(2N-1)$-th level of $P$.  Note that $\rho(\sigma^1, \sigma^2)\neq N,$ because  in this case $\sigma^1$ and $\sigma^2$ are nonequivalent and cannot lead to the same node.

We will show  by induction that on each step of the proof the number of different strings which lead to the same node increases.

Step 2. By pigeonhole principle there exist two different strings $\gamma^1$ and $ \gamma^2$ from the set $\Gamma$ such that corresponding  paths from the node $v_1$ lead to the same node $v_2$ of the $(2N-1)2$-th level of $P$.  In this case, the strings $\sigma^1 \gamma^1,\sigma^2 \gamma^1,\sigma^1 \gamma^2 $, and $\sigma^2 \gamma^2$ lead to the node $v_2$. Note that $\rho(\gamma^1, \gamma^2)\neq N,$ because  in this case $\sigma^1\gamma^1$ and $\sigma^1\gamma^2$ are nonequivalent and cannot lead to the same node. 

Adding  the same number does not change the distance between the numbers, so we have
   \[\rho(\sigma^1+\gamma^1, \sigma^2+\gamma^1)=\rho(\sigma^1,\sigma^2)\]
   and
\[\rho(\sigma^1+\gamma^2, \sigma^2+\gamma^2)=\rho(\sigma^1,\sigma^2).\]
Let  $\gamma^2>\gamma^1$. Denote $\Delta=\gamma^2-\gamma^1 $. Let us count the number of different numbers among $\sigma^1+\gamma^1,$ $\sigma^2+\gamma^1  $, $\sigma^1+\gamma^1+\Delta  $, and $\sigma^2+\gamma^1+\Delta  $.
Because $\sigma^1$ and $\sigma^2$ are different and $\rho(\sigma^1, \sigma^2)\neq N$,  the numbers from the pair $\sigma^1+\gamma^1  $, and $\sigma^2+\gamma^1  $  coincide with corresponding numbers from the pair $\sigma^1+\gamma^1+\Delta  $ and $\sigma^2+\gamma^1+\Delta  $ iff $\Delta=0 \pmod {2N}$. But  $\Delta \neq 0 \pmod {2N}$ since the  numbers $\gamma^1 $ and $ \gamma^2$ are different and  $\gamma^1, \gamma^2<2N.$ The numbers  $\sigma^1+\gamma^1+\Delta  $ and $\sigma^2+\gamma^1+\Delta   $  cannot be a permutation of numbers   $\sigma^1+\gamma^1  $ and $\sigma^2+\gamma^1  $  since $\rho(\gamma^1, \gamma^2)\neq N$ and    $\rho(\sigma^1, \sigma^2)\neq N$.
In this case, at least 3 numbers from $\sigma^1+\gamma^1  $, $\sigma^2+\gamma^1  ,$ $\sigma^1+\gamma^2  $, and $\sigma^2+\gamma^2  $ are different.

Step of induction.  Let the numbers $\sigma^1, \dots, \sigma^i$ be different on the step $i-1$ and the corresponding paths lead to the same node $v_{i-1}$ of the $(2N-1)(i-1)$-th level of $P$. 

By pigeonhole principle there exist two different strings $\gamma^1 $ and $ \gamma^2$ from the set $\Gamma$ such that  the corresponding  paths from the node $v_{i-1}$ lead to the same node $v_i$ of the $(2N-1)i$-th level of $P$.  So paths $\sigma^1\gamma^1,\dots, \sigma^i\gamma^1,\sigma^1\gamma^2,\dots, \sigma^i\gamma^2 $ lead to the same node $v_i$. Let us estimate a number of different strings among them. Note that  $\rho(\gamma^1, \gamma^2)\neq N$ since, in this case, the strings $\sigma^1\gamma^1$ and $ \sigma^1\gamma^2$ are nonequivalent but lead to the same node. 

The numbers $\sigma^1, \dots,\sigma^{i}$ are different and $\rho(\sigma^l, \sigma^j)\neq N$ for each pair $(l,j) $ such that $ l\neq j$.
Let $\sigma^1 < \dots <\sigma^i$. We will show that among $\sigma^1+\gamma^1  ,\dots, \sigma^i+\gamma^1   $ and $\sigma^1+\gamma^1+\Delta  , \dots, \sigma^i+\gamma^1+\Delta  $ at least $i+1$ numbers are different.

The sequence of numbers $\sigma^1+\gamma^1  ,\dots, \sigma^i+\gamma^1  $ coincide with the the sequence $\sigma^1+\gamma^1+\Delta  , \dots, \sigma^i+\gamma^1+\Delta  $ iff $\Delta=0 \pmod {2N}$. But $\Delta\neq 0 \pmod {2N}$ since $\gamma^1$ and $ \gamma^2$ are different and  $\gamma^1, \gamma^2<2N.$
 
Suppose that the sequence $\sigma^1+\gamma^1+\Delta  , \dots, \sigma^i+\gamma^1+\Delta  $  is a permutation of the sequence  $\sigma^1+\gamma^1  $,$\dots,$ $\sigma^i+\gamma^1  $. In this case, 
we have numbers $a_0, \dots, a_r$ from $\mathbb{Z}_{2N}$  such that all $a_j$ are from the sequence $\sigma^1+\gamma^1  $,$\dots,$ $\sigma^i+\gamma^1  $, $a_0=a_r=\sigma^1+\gamma^1  $, and $a_j=a_{j-1}+\Delta  $, where $j=1, \dots, r$. 
In this case, $r\Delta=2Nm.$
Because $N=2^k$, $\Delta<2N$, and $\Delta\neq N$  we have that $r$ is even. For $z=r/2$ we have $z\Delta=Nm$.
Since all numbers from $\sigma^1+\gamma^1,\dots, \sigma^i+\gamma^1$ are different, we have that $\rho(a_0,a_z)=N$.
So we have that $a_0 $ and $ a_z$ are  nonequivalent but the corresponding strings lead to the same node $v_i$. 
So after $i$-th step, we have that at least $i+1$ different strings lead to the same node $v_i.$

On the $N$-th step, we have that $N+1$ different strings lead to the same node $v_N$. Among these strings, there must be at least two nonequivalent strings. Thus we can follow that $P$ cannot compute the function $\mathtt{PartialMOD^k_n}$ correctly.  
\qed\end{proof}

\begin{theorem}
\label{th-NOBDD1}
	For any $k\geq 0$, there are infinitly many $n$ such that any nondeterministic  OBDD computing the partial function $ \mathtt{PartialMOD^k_n} $ has width at least $2^{k+1}$.
\end{theorem}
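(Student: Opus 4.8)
The plan is to build on the proof of Theorem \ref{th-DOBDD1}, keeping the same combinatorial scaffolding: the blocks $\Gamma=\{\gamma\in\{0,1\}^{2N-1}:\gamma=0\cdots01\cdots1\}$ with $N=2^k$, the identification of $\nu$ with $\#_1(\nu)\bmod 2N$ in $\mathbb{Z}_{2^{k+1}}$, and the two modular facts that adding a common block preserves the distance $\rho(\sigma^1,\sigma^2)$ and that two counts differing by exactly $N$ are nonequivalent. What changes is that in a nondeterministic program a prefix $\sigma$ no longer determines a single node but a configuration $\mathrm{Conf}(\sigma)\subseteq\{1,\dots,t\}$ of reachable nodes; acceptance of $\sigma\gamma$ becomes the existential condition that some $v\in\mathrm{Conf}(\sigma)$ has an accepting path on $\gamma$, so that at any cut the accepted set decomposes into the rectangle union $\bigcup_v \mathrm{reach}_v\times \mathrm{acc}_v$, where $\mathrm{acc}_v$ is the set of suffix-counts from which $v$ accepts.

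The first thing I would record is that a single cut is genuinely too weak here, which is why this is not a one-shot pigeonhole argument like Theorem \ref{th-DOBDD1}: the partial promise lets one cover all yes-instances by only two valid rectangles (the ``low-half'' residues $\{0,\dots,2^k-1\}$ and the ``high-half'' residues $\{2^k,\dots,2^{k+1}-1\}$, each an independent set for the distance-$N$ matching), so the one-cut cover bound is merely $2$ and cannot be pushed to $2^{k+1}$. The proof must therefore be multi-level, and the mechanism I would exploit is that realizing the required suffix behaviors forces the count to be resolved modulo $2^{k+1}$ at some intermediate level. Concretely, a node $v$ covering the completion of a residue $a$ has $a\in\mathrm{reach}_v$ and $-a\in\mathrm{acc}_v$, and validity forces $\mathrm{acc}_v$ to avoid $N-a$; propagating this constraint backwards through the $N$ block-stages of Theorem \ref{th-DOBDD1} should show that two counts which the program funnels into a common node at every stage cannot differ by $N$. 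Iterating the block-pigeonhole up to the $(2N-1)N$-th level then yields $N+1$ pairwise different residues whose computations are funneled into one node-set with forced-constant behavior; among $N+1$ elements of $\mathbb{Z}_{2N}$ two must differ by $N$, and these two are nonequivalent, giving the contradiction once $t<2N$.

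The hard part will be exactly the loss of the deterministic property ``nonequivalent strings reach different nodes'', which fails under nondeterminism: nonequivalent counts may perfectly well share a reachable node, provided that node is not the one witnessing their critical completions. The whole difficulty is to replace this pointwise statement by an invariant on configurations that survives all $N$ stages, so that the non-collision facts $\rho(\gamma^1,\gamma^2)\neq N$ and $\Delta\not\equiv 0\pmod{2N}$ can be reapplied at every stage without the behavior classes collapsing after one stage (where only the weak cover bound $2$ is available). If pushing this invariant through genuinely non-stable programs proves too delicate, the safe fallback is the stable case, i.e.\ realtime nondeterministic automata, where eventual periodicity forces $2^{k+1}$ to divide the least common multiple of the cycle lengths; since $2^{k+1}$ is a prime power this requires a single cycle of length at least $2^{k+1}$, hence a width of at least $2^{k+1}$. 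This already settles stable ID-OBDDs and isolates non-stability as the only remaining gap.
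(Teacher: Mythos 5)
Your scaffolding is the right one --- the blocks $\Gamma$, the identification of prefixes with residues modulo $2N=2^{k+1}$, and the endgame that $N+1$ pairwise distinct residues in $\mathbb{Z}_{2N}$ must contain a pair at distance $N$ --- and your diagnosis of why a single cut yields only a lower bound of $2$ is correct and consistent with the paper's remark that the one-way communication complexity of $\mathtt{PartialMOD^k_n}$ is $1$. But the proposal stops exactly at the step that constitutes the proof: you write that ``the whole difficulty is to replace this pointwise statement by an invariant on configurations that survives all $N$ stages'' and you do not supply that invariant, and the fallback you offer (stable programs, i.e.\ realtime nondeterministic automata, handled via periodicity of cycles) establishes a strictly weaker statement than Theorem \ref{th-NOBDD1}, which concerns arbitrary, possibly non-stable, NOBDDs. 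As written this is a plan together with an honest admission that the crucial lemma is missing, so there is a genuine gap.

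The missing piece is simpler than the configuration-level invariant you are searching for, and it is pointwise rather than set-valued: restrict all bookkeeping to \emph{accepting} computation paths of actual yes-instances. For each surviving residue $a$, fix an accepting path of a concrete input whose prefix has count $a$ and whose completion has count $-a$, and apply the pigeonhole to the nodes these accepting paths visit at level $(2N-1)i$. If such a node $v$ were shared by residues $a$ and $b$ differing by $N$, then $v$ is reachable by a prefix of count $b$ and, lying on the accepting path of the $a$-instance, admits an accepting continuation of count $-a$; gluing the two yields an accepted input of total count $b-a\equiv N \pmod{2N}$, a no-instance, a contradiction. This is precisely the paper's observation that accepting paths associated with nonequivalent strings cannot pass through the same nodes, and once it is in place the induction of Theorem \ref{th-DOBDD1} (the facts $\Delta\not\equiv 0 \pmod{2N}$, the no-permutation argument, the growth from $i$ to $i+1$ distinct residues per stage, carried out on the intersection sets of nodes along accepting paths) goes through with only notational changes. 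Your worry that nonequivalent counts may share a merely-reachable node is correct but harmless, because such nodes are never among the ones the argument tracks; no stronger invariant on whole configurations is needed, and the retreat to the stable case is unnecessary.
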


The proof is similar to deterministic case with the following modifications. 
Let $P$ -- NOBDD, that computes $ \mathtt{PartialMOD^k_n} $. 
We consider only accepting pathes in $ P $. Note that if $P$ computes $ \mathtt{PartialMOD^k_n} $ correctly then accepting paths associated with nonequivalent strings can not pass through the same nodes. Denote $N=2^k$.  Let $\Gamma=\{\gamma: \gamma\in \{0,1\}^{2N-1}, \gamma=\underbrace{0\dots  0}_{2N-1-j} \underbrace{1\dots 1}_{j}, j=0, \dots,2N-1\}$.

 Denote $V_{l}$ -- a set of nodes on the $l$-th level of $ P $. Assume that the width of $P$ is less than $2N$, that is, $|V_l|<2N$ for each $ l=0, \dots,n $.

Denote $V_l(\gamma, V)$ -- a set of nodes of $l$-th level through which  accepting paths, leading from the nodes of set $ V $ and corresponding to string $\gamma$, pass.

On step $i$ ($ i=1,2,\dots $) of the proof we consider $(2N-1)i$-th level of $P$.
Because $|V_{2N-1}|<2N$, on the first step of the proof  we have that there exists two different strings $\sigma^1,\sigma^2$ from the set $\Gamma$ such that $ V_{2N-1}(\sigma^1, V_0) \cap V_{2N-1}(\sigma^2, V_0) \neq \emptyset$.  Denote this nonempty intersection $G_1$. Next we continue our proof considering only  accepting paths which  pass through the nodes of $G_1.$

Using the same ideas as for deterministic case we can show that the number of strings with different numbers of ones by modulo $2N$, such that corresponding accepting paths  lead to the same set of nodes, increases with each step of the proof. 

On the $N$-th step of the proof we will have that there must be two different nonequivalent strings such that corresponding accepting paths  lead to the same set  $G_N$ of nodes.  That implies that $P$ does not compute the function $\mathtt{PartialMOD^k_n}$ correctly. 




\begin{theorem}
\label{th-POBDD1}
For any $k\geq 0$, there are infinitely many $n$ such that any stable probabilistic OBDD computing $ \mathtt{PartialMOD^k_n} $ with bounded error has width at least $2^{k+1}$. 
\end{theorem}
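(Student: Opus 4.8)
The plan is to treat the stable probabilistic OBDD as a fixed pair of stochastic matrices and, via the theory of Markov chains, to reduce the whole question to the spectrum of the matrix governing the \emph{1}-transitions, exploiting that $N=2^k$ is a power of two. First I would normalize: since $\mathtt{PartialMOD^k_n}$ depends only on $\#_1(\nu)$ and the program is \emph{stable}, we may assume the identity order, because ranging over all inputs $\nu$ makes the read sequence $(x_{\pi(1)},\dots,x_{\pi(n)})$ range over all of $\{0,1\}^n$. The program is then described by two fixed left-stochastic $d\times d$ matrices $A=A(1)$ and $B=A(0)$, an initial vector $v^0$, and a $0/1$ acceptance row vector $p$, with the acceptance probability on the read sequence $0^{n-m}1^m$ equal to $f(m)=p^{T}A^{m}B^{\,n-m}v^0$. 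Bounded error with margin $\varepsilon$ forces $f(m)\ge \tfrac12+\varepsilon$ when $m\equiv 0 \pmod{2N}$ and $f(m)\le \tfrac12-\varepsilon$ when $m\equiv N\pmod{2N}$, where $N=2^k$. The goal is to show that $A$ must possess an eigenvalue that is a primitive $2N$-th root of unity; a real stochastic matrix with such an eigenvalue has a recurrent class whose period is a multiple of $2N$, and such a class contains at least $2N$ states, giving width $\ge 2^{k+1}$.

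The technical heart is removing the factor $B^{\,n-m}v^0$. I would restrict $m$ to multiples $m=tN$ and further to a single residue class of $t$ modulo $L/\gcd(L,N)$, where $L$ is the period of $B$. Assuming toward a contradiction that $d<2N$, no recurrent class of $B$ has period divisible by $2N$ (such a class would already have $\ge 2N$ states), so every class period is below $2^{k+1}$ and hence has $2$-adic valuation at most $k$; therefore $L/\gcd(L,N)$ is \emph{odd}. This is exactly what lets a single residue class of $t$ contain both even and odd values -- hence both yes- and no-instances -- while keeping $n-tN$ in a fixed residue modulo $L$. Taking $n$ arbitrarily large then drives $B^{\,n-tN}v^0$ to a single limit vector $u^\ast$ with $\ell_1$-error tending to $0$; since $p^{T}A^{tN}$ has entries in $[0,1]$, the error transfers to $f$ with no amplification, so along this progression $f(tN)$ is within $o(1)$ of $\tilde f(t)=p^{T}(A^{N})^{t}u^\ast$.

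Now $\tilde f(t)=p^{T}C^{t}u^\ast$ with $C=A^{N}$ stochastic, and it must exceed $\tfrac12+\tfrac{\varepsilon}{2}$ for even $t$ while staying below $\tfrac12-\tfrac{\varepsilon}{2}$ for odd $t$. Writing $\tilde f(t)$ as an exponential polynomial in the eigenvalues of $C$, the contributions of eigenvalues inside the open unit disk decay, so a period-two oscillation that persists can only come from a peripheral eigenvalue equal to $-1$: hence $-1$ lies in the spectrum of $A^{N}$, i.e. $A$ has an eigenvalue $\lambda$ with $\lambda^{N}=-1$. Because $N=2^k$, every such $\lambda$ has order exactly $2N=2^{k+1}$, i.e. is a primitive $2^{k+1}$-th root of unity -- precisely the eigenvalue we sought -- and the recurrent-class/period bound forces $d\ge 2^{k+1}$, a contradiction.

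The step I expect to be hardest is making ``persistent oscillation implies $-1$ is an eigenvalue'' rigorous, because at a fixed $n$ only finitely many $t$ are available (read lengths are bounded by $n$), whereas the spectral conclusion is asymptotic. I would resolve this using the freedom to take \emph{infinitely many} large $n$: assuming infinitely many counterexamples of width $<2N$, their stochastic matrices live in a compact space of bounded dimension, so a convergent subsequence yields a limiting Markov chain for which $\tilde f(t)$ oscillates for \emph{all} $t\ge 0$, and the spectral argument applies to the limit. Throughout, both stability (to obtain fixed $A$ and $B$) and the fact that $N$ is a power of two (to control the parity/period interaction with $B$ and to pin the order of $\lambda$ to exactly $2N$) are essential, which is consistent with the remark that nonstability and non-power-of-two moduli genuinely complicate the classical case.
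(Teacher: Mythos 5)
Your proposal follows essentially the same route as the paper's proof: use stability to turn the all-ones suffix into a fixed Markov chain with transition matrix $A=A(1)$, show (crucially using that $N=2^k$) that this chain must contain a cyclic/recurrent class whose period is a multiple of $2^{k+1}$, and conclude that such a class alone forces width at least $2^{k+1}$ --- your ``$-1\in\mathrm{spec}(A^N)$, hence a primitive $2^{k+1}$-th root of unity in the spectrum'' is exactly the spectral restatement of the paper's lemma that the least common multiple $D$ of the cyclic periods is divisible by $2^{k+1}$. You are actually more explicit than the paper about the dependence of the initial distribution $B^{\,n-m}v^0$ on $m$ and about the fact that only finitely many powers of $A^N$ are realizable at a fixed $n$ (the paper silently takes limits $r\to\infty$ inside a single $P_n$); the one soft spot in your compactness patch is that it needs the error margins $\varepsilon_n$ bounded away from zero along the subsequence, a uniformity the paper's definition of bounded error does not supply --- though the paper's own argument is no more careful on this point, since it fixes a single $\varepsilon$ per machine and quietly assumes $n$ is large relative to that machine.
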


The proof of the Theorem is based on the technique of Markov chains and the details are given in Appendix \ref{app-POBDD1}.

\section{Nondeterministic Quantum and Classical OBDDs.}

In \cite{YS10A}, Yakary{\i}lmaz and Say showed that nondeterministic QFAs can define a super set of regular languages, called exclusive stochastic languages \cite{Paz71}. This class contains the \textit{complements} of some interesting languages:
$ PAL = \{ w \in \{0,1\}^* : w = w^r \} $, where $w^r$ is a reverse of $w$,   $ O = \{ w \in \{0,1\}^* : \#_1(w)=\#_0(w)\} $, $ SQUARE = \{ w \in \{0,1\}^* : \#_1(w)=(\#_0(w))^2\} $, and $ POWER = \{ w \in \{0,1\}^* : \#_1(w)=2^{\#_0(w)}\} $.

Based on these languages, we define three symmetric functions for any input $ \nu \in \{0,1\}^n $:
	 \[ \mathtt{NotO_n}(\nu) = \left\lbrace \begin{array}{lll}
		0 & , & \mbox{if } \#_0(\nu) = \#_1(\nu) \\
		1 & , & \mbox{otherwise }
\end{array}	 \right., \]
	\[ \mathtt{NotSQUARE_n}(\nu) = ~~~~ \left\lbrace \begin{array}{lll}
		0 & , & \mbox{if } (\#_0(\nu))^2 = \#_1(\nu) \\
		1 & , & \mbox{otherwise }
\end{array}	 \right.,  \]
	\[  \mathtt{NotPOWER_n}(\nu) = \left\lbrace \begin{array}{lll}
		0 & , & \mbox{if } 2^{\#_0(\nu)} = \#_1(\nu) \\
		1 & , & \mbox{otherwise }
\end{array}	 \right..  \]

\begin{theorem}
	Boolean Functions $ \mathtt{NotO_n} $, $ \mathtt{NotSQUARE_n} $, and $ \mathtt{NotPOWER_n} $ can be solved by a nondeterministic quantum OBDD with constant width.
\end{theorem}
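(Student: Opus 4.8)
The plan is to exploit the fact that, for a fixed input length $n$, we always have $\#_0(\nu)+\#_1(\nu)=n$, so each of the three defining equalities collapses to a condition on the single quantity $\#_0(\nu)$. Indeed, $\#_0=\#_1$ means $2\#_0=n$; $(\#_0)^2=\#_1$ means $(\#_0)^2+\#_0=n$; and $2^{\#_0}=\#_1$ means $2^{\#_0}+\#_0=n$. In each case the left-hand side is a strictly increasing function of $\#_0$ on $\{0,1,\dots,n\}$, so there is \emph{at most one} value $c_n\in\{0,\dots,n\}$ for which the equality holds. Hence each of the three functions has the uniform shape: output $0$ iff $\#_0(\nu)=c_n$, and output $1$ otherwise (and when no such integer $c_n$ exists the function is constantly $1$ and is trivial to compute). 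The three functions differ only in how $c_n$ grows with $n$ (roughly $n/2$, $\sqrt n$, and $\log_2 n$), which is irrelevant for the quantum upper bound though it matters for the classical lower bounds.

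First I would fix, for each $n$, the target value $c_n$ as above and build a width-$2$ stable quantum ID-OBDD on a single qubit, under the nondeterministic acceptance rule ``accept iff the accepting probability is strictly positive''. Let $\psi$ be any real number with $\psi/\pi$ irrational (e.g. $\psi=\pi\sqrt2$), and let $R(\alpha)$ be the real rotation by angle $\alpha$ in the computational-basis plane. I set $U^0=R\big(\psi(n-c_n)/n\big)$ and $U^1=R\big(-\psi c_n/n\big)$; these are unitary and are used at every level, so the OBDD is stable and ID. Since all of these are rotations about the same axis they commute, and reading $\nu$ therefore applies $R\big(\#_0(\nu)\cdot\psi(n-c_n)/n+\#_1(\nu)\cdot(-\psi c_n/n)\big)$ to the initial state $\ket0$. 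Substituting $\#_1=n-\#_0$, the accumulated angle simplifies to $(\#_0(\nu)-c_n)\psi$.

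The key step is then the analysis of the accepting amplitude. The final state is $\cos\big((\#_0-c_n)\psi\big)\ket0+\sin\big((\#_0-c_n)\psi\big)\ket1$, so, taking $\ket1$ as the single accepting node, the accepting probability equals $\sin^2\big((\#_0-c_n)\psi\big)$. This vanishes iff $(\#_0-c_n)\psi\equiv 0\pmod\pi$; because $\psi/\pi$ is irrational, for the integer $m=\#_0-c_n$ this occurs iff $m=0$, i.e. iff $\#_0(\nu)=c_n$. Thus the accepting probability is $0$ exactly on the inputs where the function is $0$ and strictly positive on all other inputs, which is precisely the nondeterministic acceptance condition. The same construction (with the appropriate $c_n$) works verbatim for $\mathtt{NotO_n}$, $\mathtt{NotSQUARE_n}$, and $\mathtt{NotPOWER_n}$, so each is computed by a nondeterministic quantum OBDD of width $2$; this is the explicit realization of the observation that their complements lie among the exclusive stochastic languages.

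I expect the only genuinely delicate points to be bookkeeping rather than conceptual: verifying uniqueness of $c_n$ from the monotonicity of $(\#_0)^2+\#_0$ and of $2^{\#_0}+\#_0$; handling the degenerate lengths $n$ for which no integer $c_n$ exists (where the function is constant); and checking that absorbing the required offset $c_n$ into $U^0$ and $U^1$ via $\#_1=n-\#_0$ is consistent with the model's requirements that the initial state be a fixed basis state and that the measurement be in the computational basis. Once the reduction to a single-value test on $\#_0$ is in hand, detecting that value by an irrational rotation presents essentially no obstacle.
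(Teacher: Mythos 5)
Your construction is correct, but it is a genuinely different route from the paper's. The paper disposes of this theorem in one line by importing the nondeterministic QFAs of Yakary{\i}lmaz and Say for the \emph{languages} $O$, $SQUARE$, and $POWER$; those automata must work for every input length at once, and for $SQUARE$ and $POWER$ they are nontrivial constructions. You instead exploit the nonuniformity of the OBDD model: for a fixed $n$ the constraint $\#_0(\nu)+\#_1(\nu)=n$ collapses each defining equality to $\#_0(\nu)=c_n$ for a unique $c_n$ (by strict monotonicity of $2m$, $m^2+m$, and $2^m+m$), so all three functions become the same single-value test on $\#_0$, which your width-$2$ irrational-rotation automaton with accumulated angle $(\#_0-c_n)\psi$ decides under the ``accept iff positive probability'' rule; the trick of smearing the offset $-c_n\psi$ across the $n$ steps via $\#_1=n-\#_0$ correctly sidesteps the absence of an end-marker transformation in the model. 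What your approach buys is uniformity and an explicit, elementary width-$2$ bound for all three functions, and it makes visible that as fixed-length counting problems they are all equally easy for nondeterministic quantum OBDDs; what the paper's approach buys is generality, since the cited QFAs recognize the underlying languages without knowing $n$, whereas your construction depends essentially on the fixed input length and says nothing about the language versions. The only caveats are the ones you already flag (degenerate $n$ with no integer $c_n$, where the function is constant), plus making sure the nondeterministic acceptance convention for quantum OBDDs is stated, since the paper only defines it implicitly by reference to the QFA literature.
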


For all these three functions, we can define nondeterministic quantum (stable ID-) OBDD with constant width based on nondeterministic QFAs for languages $ O $, $SQUARE$, and $POWER$, respectively \cite{YS10A}.

The complements of $ PAL, O, SQUARE $ and $POWER$ cannot be recognized by classical nondeterministic finite automata. But, for example, the function version of the  complement of $PAL$, $ \mathtt{NotPAL_n} $, which returns 1 only for the non-palindrome inputs, is quite easy since it can be solved by a deterministic OBDD with width $3$. Note that the order of such an OBDD is not natural $(1,\dots,n)$. However, as will be shown soon, this is not the case for the function versions of the complements of the other three languages.

\begin{theorem}\label{nondetermenisticBoundTheorem}
There are infinitely many $n$ such that any NOBDD $P_n$ computing $ \mathtt{NotO_n} $ has width at least $\lfloor\log n\rfloor -1 $.
\end{theorem}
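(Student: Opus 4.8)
The plan is to fix an even $n$ (these provide the infinitely many $n$, since for odd $n$ equality $\#_0(\nu)=\#_1(\nu)$ is impossible, $\mathtt{NotO_n}$ is constantly $1$, and width $1$ suffices) and to analyze a single ``cut'' of the program after exactly $n/2$ input bits have been read. Write each input as $\nu=\sigma\tau$, where $\sigma$ fixes the bits in the first $n/2$ positions of the testing order $\pi$ and $\tau$ fixes the remaining ones. Let $V$ be the set of nodes at this level, so $|V|$ is at most the width $w$. For $v\in V$ let $R_v$ be the set of prefixes $\sigma$ from which some computation reaches $v$, and $C_v$ the set of suffixes $\tau$ from which some computation leads from $v$ to an accepting sink. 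By read-onceness the split is clean, so a nondeterministic run accepts $\sigma\tau$ iff some $v$ satisfies $\sigma\in R_v$ and $\tau\in C_v$; hence the accepted set equals $\bigcup_{v\in V}R_v\times C_v$, a union of at most $w$ combinatorial rectangles, and since $P_n$ computes $\mathtt{NotO_n}$ each such rectangle lies inside $(\mathtt{NotO_n})^{-1}(1)$. Thus $w$ is at least the number of all-$1$ rectangles needed to cover the $1$-entries.

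Next I would exploit that $\mathtt{NotO_n}$ is symmetric, which also makes the argument insensitive to the order $\pi$. Index prefixes by their number of ones $a\in\{0,\dots,n/2\}$ (realize $a$ by placing $a$ ones among the first-half positions) and suffixes by their number of ones $c\in\{0,\dots,n/2\}$. On $\sigma\tau$ one has $\#_1(\nu)=a+c$, so $\mathtt{NotO_n}(\sigma\tau)=0$ exactly when $a+c=n/2$. Relabelling the suffix index by $j=n/2-c$ turns the restricted communication matrix into the not-equal matrix on $m=n/2+1$ indices: entry $(a,j)$ equals $0$ iff $a=j$ and $1$ otherwise. Restricting a rectangle cover of the whole accepted set to these chosen prefixes and suffixes again yields an all-$1$ rectangle cover of this submatrix, so $w$ is at least the minimum number of all-$1$ rectangles covering $\mathit{NE}_m$.

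Finally I would prove the logarithmic cover bound for $\mathit{NE}_m$. Suppose $R_1\times C_1,\dots,R_k\times C_k$, with $R_t,C_t\subseteq\{0,\dots,n/2\}$, are all-$1$ rectangles covering every off-diagonal entry; avoiding the diagonal forces $R_t\cap C_t=\emptyset$ for each $t$. For index $i$ set $\kappa_i=\{t: i\in C_t\}$. If $\kappa_i=\kappa_j$ with $i\ne j$, then the entry $(i,j)$ is covered by some $t$ with $i\in R_t$ and $j\in C_t$; but $j\in C_t$ gives $t\in\kappa_j=\kappa_i$, i.e.\ $i\in C_t$, contradicting $R_t\cap C_t=\emptyset$. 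Hence $i\mapsto\kappa_i$ is injective, so $m\le 2^k$ and $k\ge\log_2 m=\log_2(n/2+1)\ge\log_2 n-1$. Since $w\ge k$ and $w$ is an integer, this yields $w\ge\lfloor\log n\rfloor-1$.

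The cut-to-rectangles reduction and the relabelling are routine, and the symmetry of $\mathtt{NotO_n}$ neutralizes the arbitrary variable order. The crux, and the step I expect to be the genuine obstacle, is the cover-number lower bound for the not-equal matrix: nondeterminism collapses the trivial linear deterministic width down to $\Theta(\log n)$, so a crude averaging or a single fooling set is too weak, and one really needs the injectivity of the ``column signatures'' $\kappa_i$ to pin the bound at $\log_2 m$.
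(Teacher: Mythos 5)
Your proof is correct, but it takes a genuinely different route from the paper's. The paper first proves (Lemma~\ref{leqN}) that any \emph{deterministic} OBDD for $\mathtt{NotO_n}$ has width at least $n/2+1$, by exhibiting $n/2+1$ prefixes that must reach pairwise distinct nodes at the level $n/2$, and then invokes the subset-construction simulation of a width-$d$ NOBDD by a width-$2^d$ OBDD (the step referenced as ``the same way as in the proof of Theorem~\ref{th-NOBDD1}''), which yields $2^d\geq n/2+1$ and hence $d>\log n-1$. You instead lower-bound the nondeterministic width directly: you read the cut at level $n/2$ as a cover of $(\mathtt{NotO_n})^{-1}(1)$ by at most $w$ all-$1$ combinatorial rectangles, use symmetry to reduce to the non-equality matrix on $m=n/2+1$ indices, and prove its cover number is at least $\log_2 m$ via the injectivity of the column signatures $\kappa_i$. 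The two arguments share the same combinatorial core --- an injection of $n/2+1$ objects into the power set of the $w$ cut nodes (your $\kappa_i$ is precisely the set of cut nodes from which the $i$-th suffix leads to acceptance, the dual of the reachable-subset map underlying the simulation) --- but your version is self-contained and makes the one-way nondeterministic communication-complexity content explicit, whereas the paper's is more modular, reusing Lemma~\ref{leqN} and a simulation it needs elsewhere anyway. Both give the same numeric bound, and your handling of the arbitrary order $\pi$ and of odd $n$ is sound.
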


The proof of Theorem is based on the complexity properties of Boolean function $\mathtt{NotO_n}$. At first we will discuss complexity properties of this function in Lemma \ref{leqN}. After that we will prove claim of Theorem.
	
\begin{lemma}\label{leqN}
There are infinitely many $n$ such that any OBDD computing $ \mathtt{NotO_n} $ has width at least $n/2 + 1$. (For the proof see \ref{app-leqN}).
\end{lemma}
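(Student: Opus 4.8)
The plan is to apply the standard subfunction-counting lower bound for OBDD width, exploiting the fact that $\mathtt{NotO_n}$ is symmetric. First I observe that for odd $n$ the condition $\#_0(\nu) = \#_1(\nu)$ is unsatisfiable, so $\mathtt{NotO_n}$ is the constant $1$ and the claim is vacuous; I therefore restrict attention to the infinitely many even $n$, writing $n = 2m$, so that $\mathtt{NotO_n}(\nu) = 0$ precisely when $\#_1(\nu) = m$.

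Fix any OBDD $P$ computing $\mathtt{NotO_n}$ with testing order $\pi$, and consider the cut after the first $m$ tested bits, i.e.\ the $m$-th level of $P$. Because the function is symmetric, the residual function obtained by fixing those $m$ bits depends only on the number $s$ of ones among them, not on which variables were read nor on their individual values: given $s$ ones so far, the residual function $g_s$ on the remaining $m$ variables outputs $0$ exactly when those variables contain $m - s$ ones. This symmetry is what lets the argument run for an arbitrary order $\pi$ rather than only the natural order $(1,\dots,n)$.

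Next I count distinct residual functions. As $s$ ranges over $\{0,1,\dots,m\}$, each value realizable by a suitable assignment to the first $m$ tested bits, the target number of remaining ones $r = m - s$ ranges over $\{0,1,\dots,m\}$, and each such $r$ lies in the achievable range $\{0,\dots,m\}$ of $\#_1$ on the last $m$ variables. Hence every $g_s$ actually attains the value $0$, at a distinct Hamming level, so $g_0,\dots,g_m$ are pairwise distinct, giving $m+1 = n/2 + 1$ different residual functions. Since two assignments to the first $m$ bits that reach the same node of $P$ necessarily induce the same residual function, the $m$-th level of $P$ must contain at least $n/2 + 1$ nodes, and therefore the width of $P$ is at least $n/2 + 1$.

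The routine parts are the realizability of each count $s$ and the elementary check that distinct targets $r$ yield distinct Boolean functions. The point requiring care is the placement of the cut: one wants the number of achievable values of $s$ and the range of remaining ones to be large simultaneously, and putting the cut at the exact midpoint $j = m$ is what keeps all $m+1$ targets in range at once. This choice is the crux of obtaining the bound $n/2 + 1$ rather than a weaker estimate, and I expect it to be the main obstacle to getting the constant right.
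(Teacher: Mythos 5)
Your proof is correct and follows essentially the same route as the paper's: both place the cut at level $n/2$ and show that the $n/2+1$ possible counts of ones in the prefix induce pairwise distinct residual functions (the paper does this with explicit witness strings $1^i0^{n/2-i}$ and a distinguishing suffix $\gamma$ with $\#_0(\gamma)=i$, which is just a concrete instance of your subfunction-counting argument). Your explicit appeal to symmetry to handle an arbitrary variable order is a small improvement in rigor over the paper's remark that ``the proof is similar for any other order.''
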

{\em Proof of Theorem \ref{nondetermenisticBoundTheorem}}\quad\quad
 Let function $\mathtt{NotO_n}$ is computed  by  $NOBDD$ $P_n$ of  width $d$, then by the same way as in the proof of Theorem \ref{th-NOBDD1} we have
$d\geq\log (n/2 + 1)>\log n -1.$
\Endproof

By the same way we can show that there are infinitely many $n$ such that any NOBDD $P_n$ computing function $\mathtt{NotSQUARE_n}$ has width at least  $ \Omega(\log (n)) $ and any NOBDD $P'_n$ computing function $\mathtt{NotPOWER_n}$ has width at least  $ \Omega(\log\log (n)) $.

\section{Hierarchies for Deterministic and Nondeterministic OBDDs}

We denote $\mathsf{OBDD^d}$ and $\mathsf{NOBDD^d}$ to be the sets of Boolean functions that can be computed by $OBDD$ and $NOBDD$ of width $d=d(n)$, respectively, where $n$ is the number of variables.
In this section, we present some width hierarchies for $\mathsf{OBDD^d}$ and $\mathsf{NOBDD^d}$. Moreover, we discuss relations between these classes.
We consider $\mathsf{OBDD^d}$ and $\mathsf{NOBDD^d}$ with small (sublinear) widths and large widths.

\subsection{Hierarchies and relations for small width OBDDs}

We have the following width hierarchy for deterministic and nondeterministic models.

\begin{theorem}\label{smallwh}
For any integer $n$, $d=d(n)$, and $1<d\leq n/2$, we have
\begin{eqnarray}
\mathsf{OBDD^{d-1}} & \subsetneq  & \mathsf{OBDD^{d}}\label{st1s} \mbox{ and}
\\
\mathsf{NOBDD^{d-1}} & \subsetneq &  \mathsf{NOBDD^{d}}\label{st2s} .
\end{eqnarray}
\end{theorem}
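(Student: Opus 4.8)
The plan is to prove both strict inclusions at once, by exhibiting for every admissible $n$ a single symmetric function that is computable in width $d$ but provably not in width $d-1$, simultaneously in the deterministic and the nondeterministic model. Since every width-$(d-1)$ OBDD (resp.\ NOBDD) is already a width-$d$ OBDD (resp.\ NOBDD), the inclusions $\mathsf{OBDD^{d-1}}\subseteq\mathsf{OBDD^{d}}$ and $\mathsf{NOBDD^{d-1}}\subseteq\mathsf{NOBDD^{d}}$ are immediate, so only strictness requires work. I would take the counting function $f_n(\nu)=1$ iff $\#_1(\nu)\equiv 0 \pmod{d}$. A stable ID-OBDD that stores $\#_1 \bmod d$ in its $d$ nodes computes $f_n$, whence $f_n\in\mathsf{OBDD^{d}}\subseteq\mathsf{NOBDD^{d}}$; it then remains to show $f_n\notin\mathsf{OBDD^{d-1}}$ and $f_n\notin\mathsf{NOBDD^{d-1}}$, which give \eqref{st1s} and \eqref{st2s} respectively.

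For the deterministic lower bound I would argue in the pigeonhole style of Theorem \ref{th-DOBDD1}. Fix any order $\pi$ and cut the computation after the first $m$ tested variables, choosing $m$ with $d-1\le m\le n-(d-1)$; such an $m$ exists precisely because $d\le n/2$. Consider the $d$ prefix assignments having exactly $0,1,\dots,d-1$ ones. If two of them, with ones-counts $a\neq a'$ in $\{0,\dots,d-1\}$, reached the same node, then appending a suffix of weight $(d-a)\bmod d\le d-1$ (which fits since $n-m\ge d-1$) would force $f_n=1$ on one completion and $f_n=0$ on the other, contradicting determinism. Hence these $d$ prefixes reach $d$ distinct nodes at level $m$, so the width is at least $d$ and $f_n\notin\mathsf{OBDD^{d-1}}$, establishing \eqref{st1s}.

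For the nondeterministic part I would replace single nodes by the rectangle-cover view underlying Theorem \ref{th-NOBDD1}. Cutting again at level $m$, each node $v$ of that level yields a combinatorial rectangle $A_v\times B_v\subseteq f_n^{-1}(1)$, where $A_v$ is the set of prefixes reaching $v$ and $B_v$ is the set of suffixes leading from $v$ to an accepting sink, and $f_n^{-1}(1)=\bigcup_v A_v\times B_v$ is thus covered by at most $(\text{width})$ many such rectangles. Taking the prefixes $\alpha_a$ of weight $a$ and suffixes $\gamma_a$ of weight $(d-a)\bmod d$ for $a=0,\dots,d-1$, every input $\alpha_a\gamma_a$ lies in $f_n^{-1}(1)$; but if $\alpha_a\gamma_a$ and $\alpha_{a'}\gamma_{a'}$ with $a\neq a'$ shared a rectangle, then $\alpha_a\gamma_{a'}$ would also lie in it, while $\#_1(\alpha_a\gamma_{a'})\equiv a-a'\not\equiv 0\pmod d$, a contradiction. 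Hence $d$ distinct rectangles are needed, the NOBDD width is at least $d$, and $f_n\notin\mathsf{NOBDD^{d-1}}$, giving \eqref{st2s}. I expect this last step to be the main obstacle: unlike the deterministic case one cannot follow a single state, so the argument must pass to the accepting-rectangle cover of $f_n^{-1}(1)$ and exhibit a $d$-element fooling set inside it, which is exactly what the weight pattern $a\mapsto(d-a)\bmod d$ supplies.
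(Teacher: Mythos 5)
Your proposal is correct and follows essentially the same route as the paper: the paper also uses the function $\mathtt{MOD^d_n}$, the same width-$d$ counter for the upper bound (Lemma \ref{peq2s}), and the same pigeonhole/fooling-set argument on a cut level with prefixes of distinct weights modulo $d$ for both the deterministic and nondeterministic lower bounds (Lemma \ref{peq1s}). Your rectangle-cover phrasing of the nondeterministic case is just a more formal rendering of the paper's remark that accepting paths for the $d$ chosen inputs must pass through distinct nodes of that level.
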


\paragraph{Proof of Theorem \ref{smallwh}.}

It is obvious that $\mathsf{OBDD^{d-1}}\subseteq \mathsf{OBDD^{d}}$ and $\mathsf{NOBDD^{d-1}}\subseteq \mathsf{NOBDD^{d}}$. Let us show the inequalities of these classes. For this purpose we use the complexity properties of known Boolean function $\mathtt{MOD^k_{n}}$. 

Let $k$ be a number such that $1<k\leq n/2$. For any given input $\nu\in\{0,1\}^n$,

\begin{displaymath}
\mathtt{MOD^k_{n}}(\nu) = \left\{ \begin{array}{ll}
1, & \textrm{if } \#_{1}(\nu)= 0\textrm{ } (mod\textrm{ } k), \\
0, & \textrm{otherwise}
\end{array} \right. .
\end{displaymath}

\begin{lemma}\label{peq2s}

There is an OBDD (and so a NOBDD) $P_n$ of width $d$ which computes Boolean function $\mathtt{MOD^k_{n}}$ and $d=k$.
\end{lemma}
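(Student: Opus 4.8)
The plan is to give a direct construction of a deterministic stable ID-OBDD $P_n$ with exactly $k$ nodes; since every OBDD is a special case of a NOBDD, the very same program simultaneously witnesses the NOBDD claim. The governing observation is that $\mathtt{MOD^k_{n}}$ depends on the input $\nu$ only through the value $\#_1(\nu) \bmod k$, so it suffices to maintain a running counter of the number of $1$'s read so far, reduced modulo $k$. Only $k$ distinct residues can occur, which is exactly what forces the width down to $k$.

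Concretely, I would label the nodes by the residues $0,1,\dots,k-1$, read the bits in the natural order $\pi=(1,\dots,n)$, and make the program stable by using one and the same transition pair $T_j=(A(0),A(1))$ at every level $j$. Here $A(0)$ is the identity matrix, so that reading a $0$ leaves the counter unchanged, while $A(1)$ is the cyclic-shift matrix sending node $i$ to node $(i+1)\bmod k$, so that reading a $1$ increments the counter. The initial node is $0$, and I set $Accept=\{0\}$.

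For correctness, a routine induction on the number of processed bits shows that after reading a prefix containing $m$ ones the computation is in node $m \bmod k$; applying this to the whole input, the active node at the end is $\#_1(\nu)\bmod k$. This equals $0$ precisely when $\#_1(\nu)\equiv 0 \pmod{k}$, i.e. exactly when $\mathtt{MOD^k_{n}}(\nu)=1$, so $P_n$ computes the function correctly and uses $k$ nodes, giving $d=k$.

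Since the lemma asserts only an upper bound on the width, no lower-bound (pigeonhole or communication) argument is required, and there is no real obstacle: the single point to confirm is that the cyclic-shift structure of $A(1)$ correctly maintains the counter interpretation across all $n$ levels, which is immediate from stability and the induction above.
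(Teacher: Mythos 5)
Your construction is correct and matches the paper's proof, which likewise counts the number of $1$'s modulo $k$ level by level and accepts iff the final residue is $0$; you have merely spelled out the transition matrices and the induction that the paper leaves implicit. Nothing further is needed.
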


{\em Proof}\quad\quad 
In each level, $P_n$ counts  number of $1$'s by modulo $k$. $P_n$ answers $1$ iff  the number in the last step is zero. It is clear that the width of $P_n$ is $k$. 
\Endproof

\begin{lemma}\label{peq1s}
Any OBDD and NOBDD computing $\mathtt{MOD^k_{n}}$ has width at least $ k$.
\end{lemma}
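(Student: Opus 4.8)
The plan is to prove a matching lower bound to Lemma~\ref{peq2s} via the standard pigeonhole / distinguishability argument for OBDDs, specialized to the counting structure of $\mathtt{MOD^k_n}$. First I would recall that for any OBDD the number of nodes appearing on a single level is bounded by the width $d$; hence it suffices to exhibit a level at which at least $k$ pairwise ``nonequivalent'' input prefixes must reach pairwise distinct nodes. For the deterministic case this is immediate once we find $k$ prefixes that any correct OBDD must separate; for the nondeterministic case I would use the same observation already employed in Theorem~\ref{th-NOBDD1}, namely that accepting paths associated with nonequivalent prefixes cannot pass through a common node, so a set of $k$ mutually distinguishable prefixes forces $k$ distinct nodes on the corresponding level.

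The key construction is the choice of distinguishing prefixes. Fix the level $\ell = \lfloor n/2 \rfloor$ (any level with at least $k-1$ tested bits remaining suffices, which is why the hypothesis $k \le n/2$ matters), and let $\sigma^0, \sigma^1, \dots, \sigma^{k-1}$ be prefixes of the bits tested in the first $\ell$ positions such that $\#_1(\sigma^j) \equiv j \pmod{k}$; since $\ell \ge k-1$ these prefixes exist. The core step is to check that any two of them, $\sigma^a$ and $\sigma^b$ with $a \not\equiv b \pmod k$, are distinguishable: there is a completion $\gamma$ over the remaining positions with $\mathtt{MOD^k_n}(\sigma^a\gamma) \ne \mathtt{MOD^k_n}(\sigma^b\gamma)$. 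Indeed, choosing $\gamma$ with $\#_1(\gamma) \equiv -a \pmod k$ forces $\#_1(\sigma^a\gamma)\equiv 0$ (output $1$) while $\#_1(\sigma^b\gamma)\equiv b-a \not\equiv 0$ (output $0$), and such a $\gamma$ exists because the number of remaining positions is at least $k-1 \ge$ the residue we need to realize. Since $\mathtt{MOD^k_n}$ is symmetric, these prefixes are automatically ``comparable'' in the sense of the earlier proofs, so nonequivalence reduces exactly to having distinct residues modulo $k$.

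With distinguishability established, I would conclude as follows. In the deterministic setting, the paths of $\sigma^0,\dots,\sigma^{k-1}$ must reach $k$ pairwise distinct nodes on level $\ell$ (two nonequivalent prefixes reaching the same node would make the OBDD give equal outputs on inputs that should differ), so the width is at least $k$. In the nondeterministic setting, I would invoke the same argument as in Theorem~\ref{th-NOBDD1}: for each yes-instance $\sigma^j\gamma_j$ there is an accepting path, and two accepting paths through a shared node on level $\ell$ would let us splice prefixes and suffixes to produce an accepting path for a no-instance, contradicting correctness; hence the accepting paths of the $k$ prefixes occupy $k$ distinct nodes, giving width at least $k$.

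The main obstacle I anticipate is the nondeterministic case, specifically making the splicing rigorous: one must argue that if an accepting path for $\sigma^a\gamma_a$ and an accepting path for $\sigma^b\gamma_b$ meet at a node $v$ on level $\ell$, then the prefix part of one can be concatenated with the suffix part of the other to yield an accepting path for, say, $\sigma^a\gamma_b$, which is a no-instance and should be rejected. This relies on the observation that in an OBDD every path of a fixed length reaches exactly level $\ell$ after reading the first $\ell$ tested bits, so the meeting is genuinely a shared node and the spliced sequence is a legitimate computation path. Handling this carefully — and noting that the deterministic claim is a special case — is where the bulk of the care goes, though the combinatorics itself is routine given the residue-based distinguishing set above.
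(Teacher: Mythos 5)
Your proposal is correct and follows essentially the same route as the paper: a pigeonhole/fooling-set argument that exhibits $k$ prefixes with pairwise distinct residues of $\#_1$ modulo $k$, shows they are pairwise distinguishable by a suitable suffix, and concludes that their (accepting) paths must occupy $k$ distinct nodes on a fixed level, with the same path-splicing observation handling the nondeterministic case. The only differences are cosmetic — the paper works at level $n-k$ and fixes the suffixes rather than the prefixes — so no further comparison is needed.
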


{\em Proof}\quad\quad
The proof is based on Pigeonhole principle. Let $P$ be a deterministic OBDD computing the function $ \mathtt{MOD^k_{n}} $. 
For each input $\nu$ from $ (\mathtt{MOD^k_{n}})^{-1}(1) $ there must be exactly one path in $P$ leading from source node to accepting node.  Let us consider $k$ inputs $\{\nu^1, \nu^2,\dots, \nu^{k} \}$ from this set such that the last $k$ bits in $\nu^j (j=1,\dots, k)$ contains exactly $j$ 1's and $(k-j)$ 0's. Let us consider the $(n-k)$-th level of $P$.
The acceptance paths for different inputs from $\{\nu^1, \nu^2,\dots, \nu^{k} \}$ must pass trough  different nodes  of 
 the $(n-k)$-th level of $P$. So the width of the $(n-k)$-th level of $P$ is at least $k$.

The proof for nondeterministic case is similar to deterministic one. For each input from  $ (\mathtt{MOD^k_{n}})^{-1}(1) $  for the function $ \mathtt{MOD^k_{n}} $ there must be  at least one path in $P$ leading from the source node to an accepting node labelling this input.  The accepting paths for different inputs from the set $\{\nu^1,\nu^2,\dots, \nu^{k} \}$  must go through different nodes of $(n-k)$-th level of $P$.
\Endproof

Boolean function $\mathtt{MOD^d_{n}} \in \mathsf{OBDD^{d}}$ and  $\mathtt{MOD^d_{n}} \in \mathsf{NOBDD^{d}}$ due to Lemma \ref{peq2s} and Boolean function $\mathtt{MOD^d_{n}} \not\in \mathsf{OBDD^{d-1}}$ and $\mathtt{MOD^d_{n}} \not\in \mathsf{NOBDD^{d-1}}$ due to  Lemma \ref{peq1s}.
This  completes the proof of the Theorem \ref{smallwh}. \qed

We have the following relationships between deterministic and nondeterministic models.

\begin{theorem}\label{smallwh2}
For any integer $n$, $d=d(n)$, and $d'=d'(n)$ such that $d\leq n/2$ and $O(\log^2 d \log\log d)<d'\leq d-1$, we have
\begin{eqnarray}
\mathsf{NOBDD^{\lfloor\log (d)\rfloor}} \subsetneq  \mathsf{OBDD^{d}} \mbox{ and} \label{st4s}
\\
 \mathsf{OBDD^{d}} \textrm{ and }   \mathsf{NOBDD^{d'}} \textrm{ are not comparable.} \label{st3s}
\end{eqnarray}

\end{theorem}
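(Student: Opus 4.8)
The plan is to establish the two claims separately, in each case exhibiting explicit witnessing functions and reusing the width bounds already proved in Lemmas \ref{peq2s}, \ref{peq1s}, and \ref{leqN}.

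For the strict inclusion (\ref{st4s}) I would first prove the containment $\mathsf{NOBDD^{\lfloor\log d\rfloor}}\subseteq\mathsf{OBDD^d}$ by the standard subset construction: a width-$w$ nondeterministic OBDD is determinized into an OBDD whose nodes at each level are the reachable subsets of the $w$ nondeterministic nodes, reading the same variable order and accepting a subset iff it meets the accepting set. This yields width at most $2^w$, and for $w=\lfloor\log d\rfloor$ we get $2^{\lfloor\log d\rfloor}\le d$. Strictness then follows from $\mathtt{MOD^d_n}$: by Lemma \ref{peq2s} it lies in $\mathsf{OBDD^d}$, while by Lemma \ref{peq1s} every nondeterministic OBDD for it has width at least $d$; since $\lfloor\log d\rfloor\le d-1$, it is not in $\mathsf{NOBDD^{\lfloor\log d\rfloor}}$.

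For incomparability (\ref{st3s}) I would produce one separating function in each direction. In one direction $\mathtt{MOD^d_n}$ again works: it is in $\mathsf{OBDD^d}$ by Lemma \ref{peq2s}, but because $d'\le d-1<d$, Lemma \ref{peq1s} shows it is not in $\mathsf{NOBDD^{d'}}$. For the other direction I would use the function $g$ on $\{0,1\}^n$ that applies $\mathtt{NotO}$ to a fixed block of $2d$ of the input bits (ignoring the rest); note $2d\le n$ since $d\le n/2$. By Lemma \ref{leqN} every deterministic OBDD for $\mathtt{NotO}$ on $2d$ bits has width at least $d+1$, and dummy variables cannot decrease this, so $g\notin\mathsf{OBDD^d}$. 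It remains to show $g\in\mathsf{NOBDD^{d'}}$, and this is the crux.

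The nondeterministic OBDD for $g$ rests on a fingerprinting idea: $g$ rejects exactly when $\#_1=d$ in the active block, i.e. when $\Delta:=\#_1-d=0$, where $|\Delta|\le d$. If $\Delta\ne 0$ then $\Delta$ has at most $\log_2 d$ distinct prime factors, so among the first $\lfloor\log_2 d\rfloor+1$ primes at least one prime $p$ satisfies $\#_1\not\equiv d \pmod p$. The program branches nondeterministically at the start into one track per such prime $p_i$; track $i$ deterministically maintains $\#_1 \bmod p_i$ (using $p_i$ nodes) and accepts iff $\#_1\not\equiv d\pmod{p_i}$ at the end, so $g(\nu)=1$ iff some track accepts. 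Since the $i$-th prime is $O(i\log i)$, the total width is $\sum_{i=1}^{\lfloor\log_2 d\rfloor+1} p_i = O(\log^2 d\,\log\log d)$, which is below $d'$ by the hypothesis on $d'$. The main obstacle is precisely this upper bound: one must verify both the prime-counting estimate bounding the summed width by $O(\log^2 d\,\log\log d)$ and the divisibility argument guaranteeing a separating prime whenever $\Delta\ne 0$; the deterministic lower bound and the determinization inclusion are then routine consequences of the cited lemmas.
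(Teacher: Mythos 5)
Your proposal is correct and follows essentially the same route as the paper: the inclusion in (\ref{st4s}) is obtained by determinization plus the function $\mathtt{MOD^d_{n}}$ with Lemmas \ref{peq2s} and \ref{peq1s}, and incomparability uses $\mathtt{MOD^d_{n}}$ in one direction and $\mathtt{NotO}$ restricted to a $2d$-bit block (the paper's $\mathtt{NotO^k_{n}}$) in the other, with the same prime-fingerprinting NOBDD of width $O(\log^2 d\log\log d)$ as in Lemmas \ref{pneq1s} and \ref{pneq4s}. Your prime-factor-counting justification for the existence of a separating prime is just an equivalent rephrasing of the paper's Chinese Remainder Theorem argument, so no substantive difference remains.
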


\paragraph{Proof of Theorem \ref{smallwh2}}
We start with (\ref{st4s}). 
By the same way as in the proof of Theorem 
\ref{th-NOBDD1}, we can show that $\mathsf{NOBDD^{\lfloor\log (d)\rfloor}}\subseteq\mathsf{OBDD^{d}}$ and, from Lemma \ref{peq1s},  we know that $\mathtt{MOD^d_{n}} \notin \mathsf{NOBDD^{\lfloor\log (d)\rfloor}}$. Then we have $\mathsf{OBDD^{d}}\neq \mathsf{NOBDD^{\lfloor\log (d)\rfloor}}$.

We continue with (\ref{st3s}). Let $k$ be even and $1<k\leq n$. For any given input $\nu\in\{0,1\}^n$,
\begin{displaymath}
\mathtt{NotO^k_{n}}(\nu) = \left\{ \begin{array}{ll}
0, & \textrm{if } \#^k_{0}(\nu)= \#^k_{1}(\nu)=k/2 \\
1, & \textrm{otherwise}
\end{array} \right. .
\end{displaymath}
Note that function $\mathtt{NotO^n_{n}}$ is identical to $\mathtt{NotO_{n}}$.
\begin{lemma}\label{pneq1s}
Any OBDD computing $\mathtt{NotO^k_{n}}$ has width at least $ k/2+1$.
\end{lemma}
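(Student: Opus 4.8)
The plan is to reuse the distinguishability (pigeonhole) scheme already employed for Lemma~\ref{peq1s} and for $\mathtt{NotO_n}$, but applied to a carefully chosen cut of the variable ordering. First I would record the two structural facts I need: $\mathtt{NotO^k_{n}}$ depends only on the first $k$ input positions $x_1,\dots,x_k$, and it outputs $0$ exactly when those positions contain precisely $k/2$ ones.

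Fix a deterministic OBDD $P$ for $\mathtt{NotO^k_{n}}$ with testing order $\pi$, and keep in mind that the position indexing $\{1,\dots,k\}$ is separate from the order $\pi$ in which bits are read. As $P$ reads the bits in the order $\pi(1),\pi(2),\dots$, the number of \emph{relevant} positions (those in $\{1,\dots,k\}$) read so far increases by at most one per step, running from $0$ up to $k$. Hence there is a level $\ell$ at which exactly $k/2$ relevant positions have already been tested; I write $R_1$ for this set of $k/2$ positions and $R_2=\{1,\dots,k\}\setminus R_1$ for the remaining $k/2$ relevant positions, all of which are read only after level $\ell$.

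Next I would exhibit $k/2+1$ inputs forced to reach pairwise distinct nodes of level $\ell$. For each $a\in\{0,1,\dots,k/2\}$ let $\sigma^a$ be an assignment to the bits tested by level $\ell$ that has exactly $a$ ones among $R_1$ (irrelevant tested bits set to $0$). For $a\neq b$ I would build a common completion $\gamma$ of the remaining bits placing $k/2-a$ ones among $R_2$, which is possible since $0\le k/2-a\le k/2$. Then $\sigma^a\gamma$ has exactly $k/2$ ones among $x_1,\dots,x_k$, so $\mathtt{NotO^k_{n}}(\sigma^a\gamma)=0$, whereas $\sigma^b\gamma$ has $k/2+(b-a)\neq k/2$ such ones, giving value $1$. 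Since a deterministic OBDD yields one fixed output per final node, the two computations cannot pass through the same node at level $\ell$; hence $\sigma^a$ and $\sigma^b$ lead to different nodes. Therefore level $\ell$ carries at least $k/2+1$ nodes and the width of $P$ is at least $k/2+1$.

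The only genuinely delicate point is guaranteeing the cut, i.e.\ that a level exists where exactly half of the relevant positions have been read; this follows from the unit-step increment of the ``relevant positions read'' counter. Everything else is the routine construction of distinguishing completions inside $R_2$, and I expect no serious obstacle beyond keeping the fixed position indexing $\{1,\dots,k\}$ and the separate testing order $\pi$ straight.
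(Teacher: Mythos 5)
Your proof is correct and follows essentially the same route as the paper, which reduces Lemma~\ref{pneq1s} to the fooling-set/pigeonhole argument of Lemma~\ref{leqN}: exhibit $k/2+1$ prefixes with distinct counts of ones among the relevant positions and, for each pair, a common completion that makes one input a $0$-instance and the other a $1$-instance. Your treatment is in fact slightly more careful than the paper's, since you explicitly justify the existence of a cut level where exactly $k/2$ of the positions $x_1,\dots,x_k$ have been read under an arbitrary order $\pi$, a point the paper dismisses with ``the proof is similar for any other order.''
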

{\em Proof}\quad\quad 
The proof can be followed by the same technique given in the proof of Lemma \ref{leqN}.
\Endproof
\begin{lemma}\label{pneq4s}
There is NOBDD $P_n$ of width $d$ that computes Boolean function $\mathtt{NotO^k_{n}}$ and $d\leq O(\log^2 k \log\log k)$.  (For the proof see \ref{app-pneq4s}).
\end{lemma}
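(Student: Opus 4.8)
The plan is to exploit nondeterminism through a modular \emph{fingerprinting} argument. Since $\mathtt{NotO^k_{n}}(\nu)=1$ exactly when the number of $1$'s among the first $k$ positions differs from $k/2$, it suffices to certify that the integer $m=\#^k_{1}(\nu)-k/2$, which satisfies $|m|\le k/2$, is nonzero; and a nonzero integer of this magnitude can always be detected as nonzero modulo some small prime. So I would build a nondeterministic OBDD that guesses a prime and checks $\#^k_{1}(\nu)\not\equiv k/2$ modulo that prime.

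For the construction I would fix the first $s$ primes $p_1,\dots,p_s$ and assemble $P_n$ as a union of $s$ independent ``tracks''. From the source, on reading the first input bit the program nondeterministically enters exactly one track $i$; this track consists of $p_i$ nodes, one per residue in $\mathbb{Z}_{p_i}$, and while the first $k$ positions are read it maintains $\#^k_{1}(\nu)\bmod p_i$ (advancing the residue on a $1$, staying on a $0$). Once the $k$-th position has been read the residue is frozen, and the remaining $n-k$ positions induce only self-loops. The accepting nodes of track $i$ are precisely the residues $r\neq(k/2\bmod p_i)$, so a branch through track $i$ reaches an accepting node iff $\#^k_{1}(\nu)\not\equiv k/2\pmod{p_i}$. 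Hence $P_n$ accepts $\nu$ iff some $p_i$ with $\#^k_{1}(\nu)\not\equiv k/2\pmod{p_i}$ exists. Correctness then splits cleanly: if $\#^k_{1}(\nu)=k/2$ then $m=0$, every track rejects, and $P_n$ outputs $0$; if $\#^k_{1}(\nu)\neq k/2$ then $m\neq0$ and I need one chosen prime not dividing $m$. Because $m$ has at most $\log_2|m|\le\log_2(k/2)$ distinct prime factors, taking $s=\lceil\log_2 k\rceil$ primes guarantees at least one $p_i\nmid m$, so some track accepts.

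The width at every level equals the total number of track nodes, $\sum_{i=1}^{s}p_i$. By the prime number theorem the $s$-th prime is $O(s\log s)$, so with $s=O(\log k)$ each $p_i=O(\log k\log\log k)$ and $\sum_{i=1}^{s}p_i\le s\cdot p_s=O(\log^2 k\log\log k)$, which is the claimed bound. The only genuinely delicate point, and the one I would treat as the main obstacle, is the quantitative estimate on the primes: I must confirm both that $O(\log k)$ primes suffice to separate every nonzero $m$ with $|m|\le k/2$ (via the bound on the number of distinct prime divisors) and that those primes stay small enough, $O(\log k\log\log k)$, for the product $s\cdot p_s$ to land inside $O(\log^2 k\log\log k)$. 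Everything else, namely the level-by-level bookkeeping of the residue counters and the self-loops past position $k$, is routine.
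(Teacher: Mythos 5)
Your proposal is correct and follows essentially the same route as the paper's proof: a Freivalds-style nondeterministic fingerprint over the first $O(\log k)$ primes, with one track per prime maintaining $\#^k_1(\nu)$ modulo that prime and accepting when the residue differs from $k/2 \bmod p_i$, giving total width $\sum_i p_i = O(\log^2 k\log\log k)$ via the prime number theorem. The only cosmetic difference is how you justify that some prime separates: you bound the number of distinct prime divisors of $m$ by $\log_2|m|$, whereas the paper picks the minimal $r$ with $p_1\cdots p_r>k$ and invokes the Chinese Remainder Theorem; both yield the same $O(\log k)$ count of primes.
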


Remember that $O(\log^2 d \log\log d)\leq d'\leq d-1$, and, by Lemma \ref{peq2s} and Lemma \ref{peq1s}, we have $\mathtt{MOD^d_{n}} \in \mathsf{OBDD^{d}}$ and $\mathtt{MOD^d_{n}} \notin\mathsf{NOBDD^{d'}}$; by Lemma \ref{pneq4s}, we have $\mathtt{NotO_{n}^{2d-1}}\in\mathsf{NOBDD^{d'}}$; and, by Lemma \ref{pneq1s}, we have $\mathtt{NotO_{n}^{2d-1}} \notin \mathsf{OBDD^{d}}$. Therefore, we cannot compare these classes and so we can follow Theorem \ref{smallwh2}. \qed

\subsection{Hierarchies and relations for large width OBDDs}

In this section, we consider OBDDs of large width. We obtain some hierarchies which are different from the ones in the previous section (Theorem \ref{smallwh}).

\begin{theorem}\label{hi1}
For any integer $n$, $d=d(n)$, $16\leq d \leq 2^{n/4}$, we have
\begin{eqnarray}
\mathsf{OBDD^{\lfloor d/8 \rfloor-1}}  \subsetneq  \mathsf{OBDD^{d}} \mbox{ and }\label{st1}\\
\mathsf{NOBDD^{\lfloor d/8 \rfloor-1}}  \subsetneq  \mathsf{NOBDD^{d}},\label{st2}
\end{eqnarray}

\end{theorem}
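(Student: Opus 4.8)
}
The inclusions $\mathsf{OBDD^{\lfloor d/8\rfloor-1}}\subseteq\mathsf{OBDD^{d}}$ and $\mathsf{NOBDD^{\lfloor d/8\rfloor-1}}\subseteq\mathsf{NOBDD^{d}}$ are immediate: a program of width $w$ becomes a program of any larger width after adding isolated (never-entered) nodes to each level. Everything is in the two strict separations, and for both it suffices to produce a single total Boolean function $g=g_{n,d}$ that is computed by an OBDD (hence by a NOBDD) of width at most $d$, yet whose width is at least $\lfloor d/8\rfloor$ on \emph{every} NOBDD, and therefore on every OBDD. The witnesses of Section 5.1 cannot serve here: $\mathtt{MOD}$ and $\mathtt{NotO}$ are symmetric, so their width is the same for every order but never exceeds $O(n)$, which is far below the target range, where $d$ may be as large as $2^{n/4}$.

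My witness is a position-\emph{weighted} modular count. Put $M=d$ and define $g(\nu)=1$ iff $\sum_{i=1}^{n} i\cdot\nu_i\equiv 0\pmod{M}$. First I would prove the upper bound by describing the program directly: for \emph{any} order, maintaining the partial weighted sum modulo $M$ uses exactly $M$ nodes per level, and the program accepts iff the final residue is $0$. This deterministic OBDD has width $M=d$, and being deterministic it is also a NOBDD of width $d$, so $g\in\mathsf{OBDD^{d}}\cap\mathsf{NOBDD^{d}}$.

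The heart of the argument is the lower bound, which must hold for an arbitrary variable order $\pi$. Fix any $\pi$ and cut it after $\lfloor n/2\rfloor$ tested variables, letting $S$ be the set of these variables. Two prefix assignments induce the same subfunction exactly when they produce the same residue $\sum_{i\in S} i\cdot\nu_i\bmod M$, and two different residues $r\neq r'$ are separated by the suffix whenever some suffix assignment realizes the residue $-r$. For the deterministic case I would count, via the pigeonhole argument of Theorem \ref{th-DOBDD1} and Lemma \ref{peq1s}, the number of pairwise-nonequivalent prefix residues at this cut. For the nondeterministic case I would follow Theorem \ref{th-NOBDD1} and observe that any all-accepting rectangle is confined to a single prefix residue (two prefixes of different residues, completed by a suffix matching one of them, would accept a rejected input), so the number of nodes at the cut is at least the number of occurring residues. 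In either model the bound reduces to the same combinatorial fact: for every $S$ with $|S|=\lfloor n/2\rfloor$ the subset sums $\{\sum_{i\in S} i\cdot\nu_i\bmod M\}$ spread over at least $M/8$ residues.

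This spreading estimate is the step I expect to be the main obstacle, and it is where the constant $8$ enters. An adversary choosing the order tries to pick $S$ so that the weights $\{i:i\in S\}$ share a large common divisor $g$ with $M$, shrinking the residue set to at most $M/g$; but only about $n/g$ of the indices $\{1,\dots,n\}$ are divisible by a given $g$, so a set of size $\lfloor n/2\rfloor$ cannot consist entirely of multiples of any $g>2$, and the residues cannot collapse below a constant fraction of $M$. Combining this prefix spread with the analogous bound for the suffix (needed to certify that distinct residues give genuinely distinct subfunctions, respectively cannot be merged into one rectangle) costs a further constant, and careful bookkeeping keeps the total loss within the factor $8$; the restriction $d\le 2^{n/4}$ guarantees enough room for the subset sums of both halves to cover the required residues. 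Once width at least $\lfloor d/8\rfloor$ is established for every order and for both models, we obtain $g\notin\mathsf{OBDD^{\lfloor d/8\rfloor-1}}$ and $g\notin\mathsf{NOBDD^{\lfloor d/8\rfloor-1}}$, completing both strict inclusions.
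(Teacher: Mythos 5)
Your overall framework is the right one (and matches the paper's): exhibit a single total function computable in width $d$ deterministically but requiring width at least $\lfloor d/8\rfloor$ in every NOBDD, for every variable order. The fatal problem is the choice of witness. For $g(\nu)=1$ iff $\sum_{i=1}^n i\cdot\nu_i\equiv 0\pmod M$, the weighted sum is an integer in $\{0,1,\dots,n(n+1)/2\}$, so over \emph{all} inputs it takes at most $1+n(n+1)/2=O(n^2)$ distinct values. Consequently, for any order and any cut, the prefix sum determines at most $O(n^2)$ distinct subfunctions, and $g$ is computable by a deterministic OBDD of width $O(n^2)$ no matter how large $M$ is. Since Theorem \ref{hi1} must cover $d$ up to $2^{n/4}$, the required lower bound $\lfloor d/8\rfloor$ is exponentially larger than $O(n^2)$ in that range, so no function of this form can witness the separation. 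Concretely, your ``spreading estimate'' --- that the subset sums $\{\sum_{i\in S} i\nu_i \bmod M\}$ cover at least $M/8$ residues --- is false as soon as $M\gg n^2$: the residues hit are at most $O(n^2)$ many, and in the extreme case $M>n(n+1)/2$ the function degenerates to ``$\nu=0^n$'', which has width~$2$. The gcd argument you give only bounds the subgroup in which the sums live; it says nothing about how much of that subgroup is actually reached, and reachability is capped by the integer range of the sum. This is exactly the reason you correctly rejected $\mathtt{MOD}$ and $\mathtt{NotO}$, but weighting the positions only improves the ceiling from $O(n)$ to $O(n^2)$, not to $2^{\Omega(n)}$.

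To get width lower bounds that scale up to $2^{n/4}$ one needs a function whose hard cut forces the program to remember an essentially arbitrary $\Theta(\log d)$-bit string, not a number of polynomial magnitude. The paper does this with $\mathtt{EQS^k_n}$, a shuffled-equality function on the first $k$ bits: marker bits route value bits into two strings $\alpha(\nu)$ and $\beta(\nu)$ that must be equal. Lemma \ref{peq1} gives a deterministic OBDD of width $8\cdot 2^{k/4}-5$, and Lemma \ref{peq2} gives, for every order, a fooling-set/pigeonhole lower bound of $2^{k/4}$ at the level where $k/4$ value bits have been read (the marker bits let one force those bits into $\alpha$ and the remaining ones into $\beta$, for any order). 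Choosing $k=4\lceil\log(d+5)\rceil-12$ then yields both strict inclusions with the stated factor~$8$. You would need to replace your witness with a function of this ``store an arbitrary short string'' type; the rest of your outline (trivial inclusions, one witness serving both models) can then be kept.
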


\paragraph{Proof of Theorem \ref{hi1}.}

It is obvious that $\mathsf{OBDD^{\lfloor d/8 \rfloor-1}}\subseteq \mathsf{OBDD^{d}}$ and $\mathsf{NOBDD^{\lfloor d/8 \rfloor-1}} $ $ \subseteq  \mathsf{NOBDD^{d}}$. 

 We define Boolean function $\mathtt{EQS^k_{n}}$ as a modification of  Boolean function  {\em Shuffled Equality} which was defined in \cite{ak96} and \cite{A97}. The proofs of inequalities are based on the complexity properties of $\mathtt{EQS^k_{n}}$.

Let $k$ be multiple of $4$ such that $4\leq k\leq 2^{n/4}$. The Boolean Function $\mathtt{EQS_{n}}$ depends only on the first $k$ bits. 

For any given input $\nu\in\{0,1\}^n$, we can define two binary strings $\alpha(\nu)$ and $\beta(\nu)$ in the following way. We call odd bits of the input {\em marker bits} and even bits  {\em value bits}. For any $i$ satisfying $1\leq i \leq k/2$, the value bit $\nu_{2i}$ belongs to $\alpha(\nu)$  if the corresponding marker bit $\nu_{2i-1}=0$ and $\nu_{2i}$ belongs to $\beta(\nu)$ otherwise. 

\begin{displaymath}
\mathtt{EQS^k_{n}}(\nu) = \left\{ \begin{array}{ll}
1, & \textrm{if } \alpha(\nu)= \beta(\nu)\\
0, & \textrm{otherwise}
\end{array} \right. .
\end{displaymath}

\begin{lemma}\label{peq1}

There is OBDD $P_n$ of width $8\cdot 2^{k/4}-5$ which computes Boolean function $\mathtt{EQS^k_{n}}$. (For the proof see \ref{app-peq1}).
\end{lemma}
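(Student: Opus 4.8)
The plan is to build an explicit OBDD $P_n$ that reads the input in the natural order $(1,2,\dots,n)$ and, while scanning the first $k$ bits, incrementally keeps exactly the information needed to decide whether $\alpha(\nu)=\beta(\nu)$; the remaining $n-k$ bits are irrelevant to $\mathtt{EQS^k_{n}}$, so on them $P_n$ simply stays at its current node. The central observation is that $\alpha$ and $\beta$ are produced by appending the value bits, in reading order, to whichever string the preceding marker bit selects. Hence at any moment $\alpha$ and $\beta$ have a common matched prefix of length $\min(|\alpha|,|\beta|)$ that must already agree for equality to remain possible, and the only data that can still affect the outcome is the unmatched suffix of the currently longer string together with the identity of that string. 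I would therefore let each node record a \emph{discrepancy} state: a pair (which of $\alpha,\beta$ is ahead; the pending unmatched suffix), with the balanced case corresponding to the empty suffix.

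The pending suffix behaves as a FIFO queue. When a value bit is appended to the shorter string it must equal the front pending bit: on a match we dequeue, on a mismatch we route to a single rejecting sink. When a value bit extends the longer string it is enqueued at the back. Two facts bound the number of live discrepancy states. First, any matched-prefix mismatch is absorbed into the sink. Second, if the longer string ever leads by more than $k/4$ positions then it already has length exceeding $k/4$, whereas a yes-instance forces $|\alpha|=|\beta|=k/4$ (there being $k/2$ value bits in all); so that node too may be merged into the sink. Consequently a live pending suffix has length between $0$ and $k/4$, and counting binary suffixes the live discrepancy states number at most $1+2\sum_{\ell=1}^{k/4}2^{\ell}=4\cdot 2^{k/4}-3$ (the balanced state plus the two directions).

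To obtain the width I would identify the widest levels. Since $P_n$ reads each marker bit and then its value bit, after a marker bit is read but before the corresponding value bit is read, the node must additionally remember the just-read marker, which decides whether the next value bit extends $\alpha$ or $\beta$; this doubles the discrepancy states at those value-reading levels to $2(4\cdot 2^{k/4}-3)=8\cdot 2^{k/4}-6$, and adding the single rejecting sink gives $8\cdot 2^{k/4}-5$ nodes. All other levels carry only the discrepancy states together with the sink, i.e. at most $4\cdot 2^{k/4}-2$ nodes. Thus the maximum over levels, and hence the width of $P_n$, is $8\cdot 2^{k/4}-5$. Finally $P_n$ accepts exactly at the nodes whose discrepancy state is balanced after the first $k$ bits, which by the queue invariant happens iff $\alpha(\nu)=\beta(\nu)$.

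The routine parts are the transition tables (dequeue, enqueue, or mismatch), which I would specify case by case on the remembered marker, the current direction, and the value bit. The step I expect to be the main obstacle is pinning the constant down to exactly $8\cdot 2^{k/4}-5$: one must argue carefully that the marker-induced doubling occurs only at the value-reading levels, that a single shared rejecting sink suffices, and that the boundary discrepancy states (empty suffix and suffix of maximal length $k/4$) are handled so no extra nodes appear. Verifying correctness — that the FIFO discipline faithfully tracks positionwise equality of $\alpha$ and $\beta$, and that every early ``too long'' or ``mismatch'' condition indeed forces inequality — is the other point demanding care, although it is conceptually straightforward.
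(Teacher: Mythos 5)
Your construction is correct and is essentially the paper's own proof: both maintain, as the node label, the queue of not-yet-compared value bits together with which of $\alpha,\beta$ it belongs to (your ``discrepancy state'' is exactly the paper's string $c$ plus its group), both cap the queue length at $k/4$ and use a single rejecting sink, and both obtain the width by doubling the non-sink states at the levels that must additionally remember the just-read marker bit, arriving at the same counts $4\cdot 2^{k/4}-2$ and $8\cdot 2^{k/4}-5$. No substantive difference from the paper's argument.
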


\begin{lemma}\label{peq2}
There are infinitely many $n$ such that any OBDD and NOBDD $P_n$ computing $ \mathtt{EQS^k_{n}} $ has width at least $2^{k/4}$. (For the proof see \ref{app-peq2}). (For the proof see \ref{app-peq2})
\end{lemma}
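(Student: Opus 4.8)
The plan is to prove the bound for \emph{every} variable order $\pi$ by fixing the marker bits in a way \emph{adapted to} $\pi$, thereby collapsing the restricted function to the EQUALITY predicate across a balanced cut, and then to invoke the standard subfunction‑counting argument (for OBDDs) and the fooling‑set argument (for NOBDDs). Only the first $k$ positions are relevant, and among these the $k/2$ value bits $\nu_2,\nu_4,\dots,\nu_k$ carry all information: the $k/2$ marker bits merely route each value bit into $\alpha(\nu)$ or into $\beta(\nu)$. The idea is that by choosing the markers we can force the equality test to compare a string read \emph{before} a chosen cut with a string read \emph{after} it, and EQUALITY is hard for both models across such a cut.

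First I would fix the cut. For a fixed order $\pi$, list the value bits in the order in which $\pi$ tests them; let $L$ be the set of pair–indices of the first $k/4$ tested value bits and $R$ the set of pair–indices of the last $k/4$, and place the cut at the level of $P_n$ immediately after the $(k/4)$-th value bit is read. Since there are $k/2$ value bits and $4\mid k$, we get $|L|=|R|=k/4$. Now I set the markers as constants adapted to this cut: each value bit with index in $L$ gets marker $0$ (routed to $\alpha$) and each value bit with index in $R$ gets marker $1$ (routed to $\beta$). With these fixed markers, $|\alpha(\nu)|=|\beta(\nu)|=k/4$, so $\mathtt{EQS^k_n}$ restricted to these marker values equals $1$ iff the string read off from $L$ equals the string read off from $R$. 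Writing $A(1)<\dots<A(k/4)$ for the sorted indices of $L$ and $B(1)<\dots<B(k/4)$ for those of $R$, the definition of $\alpha,\beta$ makes $\alpha(\nu)=\beta(\nu)$ equivalent to the $k/4$ constraints $\nu_{2A(s)}=\nu_{2B(s)}$, and by construction $A(s)\in L$ while $B(s)\in R$, so every constraint pairs a value bit read before the cut with one read after it. Thus the restricted function is exactly EQUALITY of two $(k/4)$-bit strings split across the cut, no matter how $\pi$ interleaves markers and values.

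I would finish with the identity‑matrix (fooling‑set) argument, which is strong enough for both models and gives the same bound. For the deterministic case, distinct assignments $c\in\{0,1\}^{k/4}$ to the left constraint bits induce distinct subfunctions (the only accepting completion forces the right bits to equal $c$), so there are at least $2^{k/4}$ nodes at the cut level. For the nondeterministic case, the $2^{k/4}$ inputs in which the left bits are set to $c$ and the right bits to the matching value $c$ form a fooling set: each is accepted, but splicing the left part of one with the right part of another violates some equality and must be rejected, so the chosen accepting paths traverse $2^{k/4}$ distinct nodes at the cut. Since fixing markers to constants only shrinks the width at each level, the original $P_n$ has width at least $2^{k/4}$; the conditions $4\mid k$ and $k\le 2^{n/4}$ with $n\ge k$ hold for infinitely many $n$.

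The main obstacle I expect is the \emph{adaptivity}, and making precise that the induced pairing crosses the cut: the matching of $\alpha$‑positions with $\beta$‑positions is dictated by the \emph{index order} of the value bits, not by the read order $\pi$, so one must check that routing exactly the $L$‑indices to $\alpha$ and the $R$‑indices to $\beta$ yields a perfect matching \emph{all} of whose edges straddle the cut. This is handled by the observation that the $s$-th smallest index of $A$ lies in $L$ and the $s$-th smallest index of $B$ lies in $R$, so each matched pair $\{A(s),B(s)\}$ automatically separates a pre‑cut value bit from a post‑cut one; the remaining work is the routine verification that fixing markers preserves the OBDD/NOBDD structure and that the fooling‑set crossing argument applies at the selected level.
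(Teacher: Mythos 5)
Your proposal is correct and follows essentially the same route as the paper: the paper likewise picks the level $l$ at which exactly $k/4$ value bits of $x_1,\dots,x_k$ have been read, fixes the markers so that those value bits are routed to $\alpha(\nu)$ and the remaining $k/4$ to $\beta(\nu)$, and then counts $2^{k/4}$ distinct nodes at that level via the subfunction/fooling-set argument for both the deterministic and nondeterministic cases. Your write-up just makes explicit the cross-cut pairing of sorted indices that the paper leaves implicit.
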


Boolean function $\mathtt{EQS^{4\lceil\log (d+5)\rceil-12}_n} \in \mathsf{OBDD^{d}}$ and  $\mathtt{EQS^{4\lceil\log (d+5)\rceil-12}_n} \in \mathsf{NOBDD^{d}}$ due to Lemma \ref{peq1}.

Boolean function $\mathtt{EQS^{4\lceil\log (d+5)\rceil-12}_n} \not\in \mathsf{OBDD^{\lfloor d/8 \rfloor-1}}$ and $\mathtt{EQS^{4\lceil\log (d+5)\rceil-12}_n} \not\in \mathsf{NOBDD^{\lfloor d/8 \rfloor-1}}$ due to Lemma \ref{peq2}. So $\mathsf{OBDD^{\lfloor d/8 \rfloor-1}}\neq \mathsf{OBDD^{d}}$ and $\mathsf{NOBDD^{\lfloor d/8 \rfloor-1}} $  $ \neq \mathsf{NOBDD^{d}}$. These inequalities prove Statements (\ref{st1}) and (\ref{st2}) and complete the proof of the Theorem \ref{hi1}. \qed
   
In the following theorem, we present a relationship between deterministic and nondeterministic models.

\begin{theorem}\label{hi3}
For any integer $n$, $d=d(n)$, and $d'=d'(n)$ satisfying $ d\leq 2^{n/4}$ and $O(\log^4 (d+1) \log\log (d+1))<d'<d/8-1$,  we have
\begin{eqnarray}
\mathsf{NOBDD^{\lfloor\log (d)\rfloor}} \subsetneq  \mathsf{OBDD^{d}}\mbox{ and } \label{st4}\\
\mathsf{OBDD^{d}} \textrm{ and }   \mathsf{NOBDD^{d'}} \textrm{ are not comparable.}\label{st3}
\end{eqnarray}

\end{theorem}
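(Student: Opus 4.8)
The plan is to follow the structure of the proof of Theorem~\ref{smallwh2}, but to replace the functions $\mathtt{MOD^d_n}$ and $\mathtt{NotO^k_n}$, whose OBDD complexities are only linear in $n$, by functions whose deterministic hardness is \emph{exponential}, so that they remain witnesses even when $d$ is as large as $2^{n/4}$. The shuffled equality function $\mathtt{EQS^k_n}$ will play the role that $\mathtt{MOD^d_n}$ played before (cheap for deterministic OBDDs but expensive for nondeterministic ones), and its complement $\overline{\mathtt{EQS^k_n}}$, the shuffled inequality function, will play the role of $\mathtt{NotO^k_n}$ (cheap for nondeterministic OBDDs but expensive for deterministic ones). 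Both separations then reduce to comparing the widths of these two functions at suitably chosen values of the parameter.

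For the inclusion in (\ref{st4}) I would argue exactly as in Theorem~\ref{th-NOBDD1}: the level-by-level subset (powerset) construction turns a width-$w$ NOBDD into a deterministic OBDD of width $2^w$, and since $2^{\lfloor\log d\rfloor}\le d$ this gives $\mathsf{NOBDD^{\lfloor\log d\rfloor}}\subseteq\mathsf{OBDD^d}$. For properness, and simultaneously for the direction $\mathsf{OBDD^d}\not\subseteq\mathsf{NOBDD^{d'}}$ of (\ref{st3}), I would reuse the witness from Theorem~\ref{hi1}, namely $\mathtt{EQS^k_n}$ with $k=4\lceil\log(d+5)\rceil-12$. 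By Lemma~\ref{peq1} this function already lies in $\mathsf{OBDD^d}$, while by Lemma~\ref{peq2} every NOBDD for it has width at least $2^{k/4}\ge (d+5)/8$. Since $(d+5)/8$ exceeds both $\lfloor\log d\rfloor$ and $d'$ (the latter because $d'<d/8-1$), the function lies in neither $\mathsf{NOBDD^{\lfloor\log d\rfloor}}$ nor $\mathsf{NOBDD^{d'}}$, which yields the strict inclusion in (\ref{st4}) and one half of the incomparability in (\ref{st3}).

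The remaining and decisive direction is $\mathsf{NOBDD^{d'}}\not\subseteq\mathsf{OBDD^d}$, for which I would take $\overline{\mathtt{EQS^{k'}_n}}$ with $k'=4\lceil\log(d+1)\rceil$. On the deterministic side complementation is free for OBDDs (swap accepting and rejecting nodes), so Lemma~\ref{peq2} gives a deterministic lower bound of $2^{k'/4}\ge d+1>d$, hence $\overline{\mathtt{EQS^{k'}_n}}\notin\mathsf{OBDD^d}$. On the nondeterministic side I would prove an analogue of Lemma~\ref{pneq4s}: a nondeterministic OBDD can certify $\alpha(\nu)\neq\beta(\nu)$ by guessing a prime $p$ from a range of size $O(k'\log k')$ and maintaining the running polynomial fingerprints of $\alpha(\nu)$ and $\beta(\nu)$ modulo $p$, accepting whenever these disagree (or the two lengths do). Since two distinct strings of length at most $k'/2$ differ modulo some such prime, this is correct, and its width is polynomial in $k'=\Theta(\log d)$ and can be bounded by $O(\log^4(d+1)\log\log(d+1))$, which is at most $d'$ by hypothesis. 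I expect this fingerprinting construction, which must handle the marker/value shuffle and the possibly unequal lengths of $\alpha(\nu)$ and $\beta(\nu)$ while keeping the width inside the stated $\log^4\log\log$ budget, to be the main obstacle; once the witnesses are fixed, the deterministic bounds and the subset construction are routine. Combining the two directions gives the incomparability (\ref{st3}) and completes the theorem.
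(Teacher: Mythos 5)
Your proposal is correct and follows essentially the same route as the paper: the subset construction gives the inclusion in (\ref{st4}), the witness $\mathtt{EQS^{4\lceil\log(d+5)\rceil-12}_n}$ (Lemmas \ref{peq1} and \ref{peq2}) gives properness and one direction of (\ref{st3}), and the complement $\mathtt{NotEQS^{k'}_n}$ with $k'=\Theta(\log d)$, together with a Freivalds-style fingerprinting NOBDD of width $O(k'^4\log k')$ and the deterministic lower bound inherited from Lemma \ref{peq2}, gives the other direction --- this is exactly the content of the paper's Lemmas \ref{pneq1} and \ref{pneq2}. The fingerprinting construction you flag as the main obstacle is indeed the one nontrivial ingredient, and your sketch of it (guess a prime, track the fingerprint difference and the two lengths modulo the marker/value shuffle) matches the paper's Appendix proof of Lemma \ref{pneq1}.
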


\paragraph{Proof of Theorem \ref{hi3}.}
We start with (\ref{st4}). By the same way as in proof of Theorem 
\ref{th-NOBDD1}, we can show that $\mathsf{NOBDD^{\lfloor\log (d)\rfloor}}\subseteq\mathsf{OBDD^{d}} $. By Lemma \ref{peq2},  we have $\mathtt{EQS_n^{4\lceil\log (d+5)\rceil-12}} \notin \mathsf{NOBDD^{\lfloor\log (d)\rfloor}}$ which means $\mathsf{OBDD^{d}} $ $ \neq \mathsf{NOBDD^{\lfloor\log (d)\rfloor}}$.

Now we continue with (\ref{st3}).
We use the complexity properties of Boolean function $\mathtt{NotEQS^k_{n}}$, which is the inversion of $\mathtt{EQS^k_{n}}$. 



\begin{lemma}\label{pneq2}
There are infinitely many $n$ such that any OBDD $P_n$ computing $ \mathtt{NotEQS^k_{n}} $ has width at least $2^{k/4}$.
\end{lemma}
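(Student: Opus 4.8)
The plan is to derive the bound for $\mathtt{NotEQS^k_n}$ directly from the bound for $\mathtt{EQS^k_n}$ in Lemma \ref{peq2}, exploiting the fact that for \emph{deterministic} OBDDs complementing the output costs nothing. Concretely, I would let $P$ be any OBDD computing $\mathtt{NotEQS^k_n}$ with some order $\pi$, width $d$, and accepting set $Accept$. Since $P$ is deterministic, on every input the computation ends in a single node; hence the OBDD $P'$ obtained from $P$ by keeping the same order and the same transition matrices but replacing $Accept$ with its complement $\{1,\dots,d\}\setminus Accept$ accepts exactly the inputs that $P$ rejects. As $\mathtt{EQS^k_n}$ is the pointwise negation of $\mathtt{NotEQS^k_n}$, the program $P'$ computes $\mathtt{EQS^k_n}$, and it has the same width $d$ as $P$ because only the labelling of the sink nodes changed. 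By Lemma \ref{peq2}, every OBDD for $\mathtt{EQS^k_n}$ has width at least $2^{k/4}$, so $d\geq 2^{k/4}$, which is the claim.

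As an alternative route I would rerun the argument of Lemma \ref{peq2} verbatim. That proof is a pigeonhole/subfunction-counting argument: for an arbitrary order $\pi$ one fixes a cut and observes that two prefixes reaching the same node of $P$ at the cut must induce the same subfunction on the remaining variables, so the width at the cut is at least the number of distinct subfunctions. The key point is that this partition of prefixes into subfunction-equivalence classes is unchanged when the global output is negated: two prefixes are distinguishable for $\mathtt{EQS^k_n}$ iff they are distinguishable for $\mathtt{NotEQS^k_n}$. Hence the same cut exhibits the same $\geq 2^{k/4}$ classes, giving the identical bound.

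I expect the only real obstacle to lie inside Lemma \ref{peq2} itself, namely the combinatorial step that produces $2^{k/4}$ pairwise-distinguishable prefixes for \emph{every} variable order $\pi$; there one must argue, via the marker/value structure, that at a suitable cut at least $k/4$ of the $k/2$ marker--value pairs can be used to encode independent information that the program is forced to remember. Since that work is already carried out in Lemma \ref{peq2}, the present lemma needs nothing beyond the complementation observation above. Note finally that this argument is genuinely restricted to the deterministic model: for nondeterministic OBDDs the accepting set cannot be complemented without changing the computation, which is exactly why the statement of this lemma, unlike Lemma \ref{peq2}, is claimed for $\mathsf{OBDD}$ only.
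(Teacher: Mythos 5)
Your proposal is correct, and in fact it contains the paper's own argument as its second route: the paper's entire proof of this lemma is the one-line remark that it can be proved ``by the same way as in the proof of Lemma \ref{peq2}'', i.e.\ by rerunning the cut/distinguishable-prefixes argument and noting that pairwise distinguishability of prefixes is invariant under negating the output. Your primary route --- complementing the accepting set of a deterministic OBDD for $\mathtt{NotEQS^k_{n}}$ to obtain an equal-width OBDD for $\mathtt{EQS^k_{n}}$ and then invoking Lemma \ref{peq2} as a black box --- is a genuinely different and arguably cleaner reduction: it requires no re-inspection of the marker/value combinatorics at all, and it automatically inherits the ``infinitely many $n$'' quantifier from Lemma \ref{peq2}. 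What the paper's route buys instead is uniformity of presentation (the same subfunction-counting template is reused throughout Section 5); what yours buys is modularity. Your closing caveat is also exactly right: the complementation trick is specific to the deterministic model, which is why this lemma, unlike Lemma \ref{peq2}, is stated for $\mathsf{OBDD}$ only, while the nondeterministic upper bound for $\mathtt{NotEQS^k_{n}}$ has to be established separately (Lemma \ref{pneq1}).
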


{\em Proof}\quad\quad
We can prove it by the same way as in the proof of Lemma \ref{peq2}.
\Endproof

\begin{lemma}\label{pneq1}
There is a NOBDD $P_n$ of width $d$ computing Boolean function $\mathtt{NotEQS^k_{n}}$ where $d\leq O(k^4 \log k)$. (For the proof see \ref{app-pneq1}).
\end{lemma}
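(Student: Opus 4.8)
The plan is to build a nondeterministic OBDD that, reading the first $k$ input bits in the natural order, \emph{guesses a prime} $p$ at the start and then deterministically compares modular fingerprints of the two subsequences $\alpha(\nu)$ and $\beta(\nu)$. Since $\mathtt{NotEQS^k_{n}}(\nu)=1$ exactly when $\alpha(\nu)\neq\beta(\nu)$, it suffices to detect string inequality with a single accepting path. To avoid the leading-zero ambiguity (two strings of different length may encode the same integer), I first reduce string inequality to integer inequality by prepending a $1$: for a binary string $s$ let $\widehat{I}(s)$ be the integer whose binary representation is $1s$. Then $\widehat{I}$ is injective, so $\alpha(\nu)\neq\beta(\nu)$ if and only if $\widehat{I}(\alpha(\nu))\neq\widehat{I}(\beta(\nu))$, and this one condition already accounts for both differing lengths and differing contents.

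For the construction I would scan the $k/2$ marker/value pairs in order. A one-bit flag remembers, right after a marker bit is read, whether the following value bit belongs to $\alpha$ (marker $0$) or to $\beta$ (marker $1$). I keep two Horner accumulators $A$ and $B$, both initialised to $1$ (the prepended bit), and update $A\leftarrow 2A+\nu_{2i}\pmod p$ when the pair is an $\alpha$-pair and $B\leftarrow 2B+\nu_{2i}\pmod p$ otherwise. After the $k$-th bit these satisfy $A\equiv\widehat{I}(\alpha(\nu))$ and $B\equiv\widehat{I}(\beta(\nu))\pmod p$, and the program accepts iff $A\neq B$. The nondeterminism is used only once, at the very beginning, to branch over a fixed list of primes $p\in\{p_1,\dots,p_m\}$; each branch then runs deterministically with its own pair of residues.

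Correctness follows from elementary number theory, exactly as in the proof of Lemma~\ref{pneq4s}. \textbf{Soundness:} if $\alpha(\nu)=\beta(\nu)$ then $\widehat{I}(\alpha(\nu))=\widehat{I}(\beta(\nu))$, so $A=B$ for every prime and no branch accepts. \textbf{Completeness:} if $\alpha(\nu)\neq\beta(\nu)$ then $D=\widehat{I}(\alpha(\nu))-\widehat{I}(\beta(\nu))$ is a nonzero integer with $|D|<2^{k/2+1}$, hence $D$ has fewer than $k/2+1$ distinct prime divisors; so if the list contains the first $m=k/2+2$ primes, at least one $p_j\nmid D$, and that branch observes $A\neq B$ and accepts. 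By standard bounds on the $m$-th prime, the largest prime used is $p_m=O(k\log k)$, so every residue lies in a range of size $O(k\log k)$.

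The only delicate point is this completeness/counting argument together with the width bookkeeping; the algorithmic part is routine, and exactness (never accepting when $\alpha(\nu)=\beta(\nu)$) is automatic, which is exactly what a nondeterministic OBDD requires. A reachable state is a triple $(p_j,A,B)$ together with the one-bit flag, so the width is at most $\sum_{j=1}^{m}2\,p_j^{\,2}\le 2m\,p_m^{2}=O\!\left(k\,(k\log k)^2\right)=O(k^3\log^2 k)$, which is within the claimed $O(k^4\log k)$. I expect the prime-size estimate (bounding $p_m$) and the verification that the interleaved Horner updates genuinely produce the residues of $\widehat{I}(\alpha(\nu))$ and $\widehat{I}(\beta(\nu))$ to be the steps requiring the most care.
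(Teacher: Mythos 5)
Your proposal is correct and rests on the same engine as the paper's proof: a single initial nondeterministic guess of a prime from a fixed list, followed by deterministic modular fingerprinting in the style of Freivalds, with the Chinese Remainder/prime-counting argument guaranteeing that some branch witnesses $\alpha(\nu)\neq\beta(\nu)$. Where you genuinely diverge is in what each branch stores. The paper keeps a single residue $r(\nu^i)=bin(\alpha(\nu^i))-bin(\beta(\nu^i))\bmod p$ \emph{together with the two current lengths} $|\alpha(\nu^i)|,|\beta(\nu^i)|$ (the lengths are needed both to weight the incoming bit correctly and to catch the case $|\alpha(\nu)|\neq|\beta(\nu)|$ explicitly), which costs a factor of $k^2/4$ per branch and yields the stated $O(k^4\log k)$. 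You instead keep the two residues $A,B$ separately, run MSB-first Horner updates on whichever accumulator the marker bit selects, and dispose of all length bookkeeping by prepending a leading $1$ so that $\widehat{I}$ is injective across lengths; a length mismatch then automatically shows up as $\widehat{I}(\alpha(\nu))\neq\widehat{I}(\beta(\nu))$ and is caught by some prime. The price is that your fingerprinted integers are bounded by $2^{k/2+1}$ rather than $2^{k/4}$, so you need $\Theta(k)$ primes rather than $k/4$, but the per-branch width drops to $2p^2=O(k^2\log^2 k)$, giving a total of $O(k^3\log^2 k)$ --- strictly inside the claimed $O(k^4\log k)$. All the individual steps (injectivity of $\widehat{I}$, the bound on the number of distinct prime divisors of $D$, the prime-number-theorem estimate $p_m=O(k\log k)$, and the correctness of the interleaved Horner updates) check out, so this is a valid and in fact slightly tighter proof of the lemma.
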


Remember that $O(\log^4 (d+1) \log\log (d+1))\leq d'\leq \lfloor d/8\rfloor-1$, and, by Lemma \ref{peq2} and Lemma \ref{peq1}, we have $\mathtt{EQS_n^{4\lceil\log (d+5)\rceil-12}} \in \mathsf{OBDD^{d}}$ and $\mathtt{EQS_n^{4\lceil\log (d+5)\rceil-12}} \notin\mathsf{NOBDD^{d'}}$; by Lemma \ref{pneq1}, we have $\mathtt{NotEQS_n^{4\lceil\log(d)\rceil+4}}\in\mathsf{NOBDD^{d'}}$; and, by Lemma \ref{pneq2}, we have $\mathtt{NotEQS_n^{4\lceil\log(d)\rceil+4}} \notin \mathsf{OBDD^{d}}$. Therefore we cannot compare this classes and so we can follow Theorem \ref{hi3}. \qed
%


%


\begin{thebibliography}{99}
\bibitem{A97}
F. Ablayev. Randomization and nondeterminsm are incomparable for ordered read-
once branching programs. Electronic Colloquium on Computational Complexity
(ECCC), 4(21), 1997.

\bibitem{ag05}
F. Ablayev and A. Gainutdinova. Complexity of quantum uniform and nonuniform
automata. In Developments in Language Theory, volume 3572 of LNCS, pages 78--
87. Springer, 2005.

\bibitem{agk01}  F. Ablayev, A. Gainutdinova, and M. Karpinski. On computational power of quan-
tum branching programs. In FCT, volume 2138 of LNCS, pages 59--70. Springer,
2001.

\bibitem{agkmp05} F. M. Ablayev, A. Gainutdinova, M. Karpinski, C. Moore, and C. Pollett. On the
computational power of probabilistic and quantum branching program. Informa-
tion Computation, 203(2):145--162, 2005.

\bibitem{ak96}F. M. Ablayev and M. Karpinski. On the power of randomized branching programs.
In ICALP, volume 1099 of LNCS, pages 348--356. Springer, 1996.

\bibitem{AF98} A. Ambainis and R. Freivalds. 1-way quantum finite automata: strengths, weak-
nesses and generalizations. In FOCS, pages 332--341. IEEE Computer Society,
1998. (http://arxiv.org/abs/quant-ph/9802062).

\bibitem{AY12}A. Ambainis and A. Yakary{\i}lmaz. Superiority of exact quantum automata for
promise problems. Information Processing Letters, 112(7):289--291, 2012.


\bibitem{BC01B}A. Bertoni and M. Carpentieri. Analogies and differences between quantum and
stochastic automata. Theoretical Computer Science, 262(1-2):69--81, 2001.

\bibitem{Fre79} R. Freivalds. Fast probabilistic algorithms. In Mathematical Foundations of Com-
puter Science 1979, volume 74 of LNCS, pages 57--69, 1979.

\bibitem{GY14A} V. Geffert and A. Yakary{\i}lmaz. Classical automata on promise problems. In DCFS
2014, LNCS. Springer, 2014. (To appear).

\bibitem{hs00}J. Hromkovic and M. Sauerhoff. Tradeoffs between nondeterminism and complexity
for communication protocols and branching programs. In STACS, volume 1770 of
LNCS, pages 145--156. Springer, 2000.

\bibitem{hs03}J. Hromkovic and M. Sauerhoff. The power of nondeterminism and randomness
for oblivious branching programs. Theory of Computing Systems, 36(2):159--182,
2003.

\bibitem{KW97} A. Kondacs and J. Watrous. On the power of quantum Finite state automata. In
FOCS, pages 66--75. IEEE Computer Society, 1997.

\bibitem{nhk00}M. Nakanishi, K. Hamaguchi, and T. Kashiwabara. Ordered quantum branching
programs are more powerful than ordered probabilistic branching programs under
a bounded-width restriction. In COCOON, volume 1858 of LNCS, pages 467--476.
Springer, 2000.

\bibitem{Paz71} A. Paz. Introduction to Probabilistic Automata. Academic Press, New York, 1971.

\bibitem{RY14A} J. Rashid and A. Yakary{\i}lmaz. Implications of quantum automata for contextual-
ity. In CIAA, LNCS. Springer, 2014 (To appear). arXiv:1404.2761.

\bibitem{SS05} M.Sauerhoff and D. Sieling. Quantum branching programs and space-bounded
nonuniform quantum complexity. Theoretical Computer Science, 334(1-3):177--225,
2005.

\bibitem{Wat03} J. Watrous. On the complexity of simulating space-bounded quantum computa-
tions. Computational Complexity, 12(1-2):48--84, 2003.

\bibitem{Wat09A} J. Watrous. Encyclopedia of Complexity and System Science, chapter Quantum
computational complexity. Springer, 2009. arXiv:0804.3401.

\bibitem{Weg00} I. Wegener. Branching Programs and Binary Decision Diagrams. SIAM, 2000.

\bibitem{YS10A} A. Yakary{\i}lmaz and A. C. C. Say. Languages recognized by nondeterministic
quantum finite automata. Quantum Information and Computation, 10(9\&10):747--
770, 2010.

\bibitem{YS11A} A. Yakary{\i}lmaz and A. C. C. Say. Unbounded-error quantum computation with
small space bounds. Information and Computation, 279(6):873--892, 2011.

\end{thebibliography}

%
%



\newpage

\appendix

\section{The Proof of Theorem \ref{th-POBDD1}}
\label{app-POBDD1}

Assume that there is a stable probabilistic
OBDD $P_n$ of width $d < 2^{k+1}$ computing $ \mathtt{PartialMOD^k_n} $ with probability
$1/2+\varepsilon$ for a fixed $\varepsilon\in (0,1/2]$.  Let
$v^j=(v^j_1,\dots,v^j_d)$ be a probability distribution of
nodes of $P_n$ at the $j$-th level, where $v^j_i$ is the probability
of being in the $i$-th node at the $j$-th level. We can describe the
computation of $P_n$ on the input $\nu=\nu_1,\dots,\nu_n$ as follows:

\begin{itemize}

\item The computation of $P_n$ starts from the initial probability
distributions vector $v^0$.

\item On the $j$-th step, $1\leq j \leq n$, $P_n$ reads input $\nu_{i_j}$
and transforms the vector $v^{j-1}$ to $v^j=A v^{j-1}$, where
$A$ is the $q
\times q$ stochastic matrix, $A=A(0)$ if $\nu_{i_j}=0$ and $A=A(1)$ if
$\nu_{i_j}=1$.

\item After the last ($n$-th) step of the computation $P_n$ accepts the input
$\nu$ with probability $P_{acc}( \nu)=\sum_{j \in
Accept}v^n_j$. If $ \mathtt{PartialMOD^k_n} ( \nu)=1$ then we have $P_{acc}(
\nu)\geq 1/2 + \varepsilon$ and if $ \mathtt{PartialMOD^k_n} ( \nu)=0 $ then we have  $P_{acc}(
\nu)\leq 1/2 - \varepsilon$.

\end{itemize}

Without loss of generality, we assume that $P_n$ reads the inputs in the
natural order $x_1,\dots,x_n$.
We consider only inputs $\tilde\nu_n,\dots,\tilde\nu_1$ such that
$\tilde\nu_i=\tilde\nu_i^0 \tilde\nu_i^1,$ where
$\tilde\nu_i^0=\underbrace{0 \dots 0}_{n-i},$
$\tilde\nu_i^1=\underbrace{1 \dots 1}_{i}.$

For $i \in \{1,\dots, n\}$, we denote by $\alpha^i$ the probability
distribution after reading the part $ \tilde\nu_i^0$, i.e.
$\alpha^i=A^{n-i}(0)v^0 $.  There are only ones
in the $ \tilde\nu_i^1$, hence the computation after reading $
\tilde\nu_i^0$ can be described by a Markov chain. That is, $\alpha^i$
is the initial probability distribution for a Markov process and $A(1)$
is the transition probability matrix.

According to the classification of Markov chains described in the Section 2 of the book by Kemeny and Snell\footnote{J.~G. Kemeny and J.~L. Snell.
\newblock {\em Finite Markov Chains}.
\newblock Van Nostrand Company, INC, 1960.
}, the
states of the Markov chain are divided into ergodic and transient
states. An {\em ergodic set of states} is a set which a process cannot
leave once it has entered. A {\em transient set of states} is a set
which a process can leave, but cannot return once it has left. An {\em
ergodic state} is an element of an ergodic set. A {\em transient
state} is an element of a transient set.

An arbitrary Markov chain $C$ has at least one ergodic set. $C$ can be
a Markov chain without any transient set. If a Markov chain $C$ has
more than one ergodic set, then there is absolutely no interaction
between these sets. Hence we have two or more unrelated Markov chains
lumped together.  These chains can be studied separately. If a Markov
chain consists of a single ergodic set, then the chain is called an
{\em ergodic chain}. According to the classification mentioned above,
every ergodic chain is either regular or cyclic.

If an ergodic chain is regular, then for sufficiently high powers of
the state transition matrix, $A$ has only positive elements. Thus, no
matter where the process starts, after a sufficiently large number of
steps it can be in any state. Moreover, there is a limiting vector of
probabilities of being in the states of the chain, that does not
depend on the initial state.

If a Markov chain is cyclic, then the chain has a period $t$ and all
its states are subdivided into $t$ cyclic subsets $(t>1)$. For a given
starting state a process moves through the cyclic subsets in a
definite order, returning to the subset with the starting state after
every $t$ steps. It is known that after sufficient time has elapsed,
the process can be in any state of the cyclic subset appropriate for
the moment. Hence, for each of  $t$ cyclic subsets the $t$-th power
of the state transition matrix $A^t$ describes a regular Markov chain.
Moreover, if an ergodic chain is a cyclic chain with the period $t$, it
has at least $t$ states.

Let $C_1, \dots, C_l$ be cyclic subsets of states of Markov chain with periods $t_1, \dots, t_l$, respectively, and $D$ be the least common multiple of $t_1, \dots, t_l$.

\begin{lemma}
$D$ must be a multiple of $2^{k+1}$.
\end{lemma}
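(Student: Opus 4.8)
The plan is to argue by contradiction: assuming $2^{k+1}\nmid D$, I would exhibit a yes-instance and a no-instance whose acceptance probabilities are forced to lie within an arbitrarily small gap, contradicting the fact that a correct program must separate them by at least $2\varepsilon$. The argument only uses the ones-chain $A(1)$, so the first ingredient I would record is the consequence of the Markov-chain classification quoted above: for any fixed starting distribution $\alpha$, once the transient mass has decayed the vectors $A^{j}(1)\alpha$ become asymptotically $D$-periodic in $j$, where $D=\mathrm{lcm}(t_1,\dots,t_l)$. Precisely, $\lim_{j\to\infty}\lVert A^{j+D}(1)\alpha-A^{j}(1)\alpha\rVert=0$: the probability of occupying a transient state tends to $0$, the split of the remaining mass among the ergodic classes is fixed by $\alpha$, and inside each ergodic class advancing by the multiple $D$ of its period $t_m$ returns the limiting cyclic orbit to the same phase. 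Summing over a bounded number of periods then gives $\lim_{j\to\infty}\lVert A^{j+mD}(1)\alpha-A^{j}(1)\alpha\rVert=0$ for every fixed $m$.

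Next I would settle the elementary number theory needed to land on the residue $2^{k}$. Writing $D=2^{s}u$ with $u$ odd, the hypothesis $2^{k+1}\nmid D$ is exactly $s\le k$. Then the bounded integer $m=2^{\,k-s}\le 2^{k}$ satisfies $mD=2^{k}u\equiv 2^{k}\pmod{2^{k+1}}$, because $u$ is odd. Fix this $m$.

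Then I would construct the two inputs over a common block of zeros. Put $\alpha=A^{Z}(0)v^{0}$ for a large $Z$; this is the only place the zero-transition $A(0)$ enters, and reusing the same $Z$ for both instances guarantees an identical start $\alpha$ for the ones-chain, so that only the period of $A(1)$ can play a role. For a large $Y\equiv 0\pmod{2^{k+1}}$ set $\nu_{\mathrm{yes}}=0^{Z}1^{Y}$ and $\nu_{\mathrm{no}}=0^{Z}1^{\,Y+mD}$. Since $\#_1(\nu_{\mathrm{yes}})=Y\equiv 0$ and $\#_1(\nu_{\mathrm{no}})=Y+mD\equiv 2^{k}\pmod{2^{k+1}}$, the former is a yes-instance and the latter a no-instance, so correctness forces $P_{acc}(\nu_{\mathrm{yes}})\ge \tfrac12+\varepsilon$ and $P_{acc}(\nu_{\mathrm{no}})\le \tfrac12-\varepsilon$, a gap of at least $2\varepsilon$. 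On the other hand $P_{acc}(\nu_{\mathrm{yes}})=\sum_{i\in Accept}\bigl(A^{Y}(1)\alpha\bigr)_i$ and $P_{acc}(\nu_{\mathrm{no}})=\sum_{i\in Accept}\bigl(A^{Y+mD}(1)\alpha\bigr)_i$, so the first paragraph makes their difference tend to $0$ as $Y\to\infty$. Choosing $Y$ large enough shrinks this difference below $2\varepsilon$, the required contradiction; hence $2^{k+1}\mid D$.

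The step I expect to be the main obstacle is making the periodicity claim of the first paragraph fully rigorous, namely quantifying the decay of the transient mass and extracting clean $D$-periodicity from a mixture of ergodic classes carrying different periods $t_m$; this is precisely what the Kemeny–Snell classification is invoked for. A secondary but delicate point is the role of $A(0)$: keeping the number of zeros identical in the two instances is what stops the period of the zero-chain from entering and lets the conclusion be phrased purely in terms of $D$. This forces $\nu_{\mathrm{yes}}$ and $\nu_{\mathrm{no}}$ to have different total lengths, which is harmless for a stable program because its transitions do not depend on the level; had one instead demanded two inputs of equal length, the two blocks of zeros would have to differ and one would only obtain that $2^{k+1}$ divides the least common multiple of the periods of $A(0)$ and $A(1)$.
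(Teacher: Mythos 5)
Your overall strategy coincides with the paper's: invoke the Kemeny--Snell classification to get asymptotic $D$-periodicity of the ones-chain $A(1)$, write $D=2^{s}u$ with $u$ odd and $s\le k$ so that a suitable multiple of $D$ is congruent to $2^{k}$ modulo $2^{k+1}$, and contradict the required $2\varepsilon$ separation between a yes-instance and a no-instance. The number-theoretic core is identical to the paper's choice of the two exponents $s\cdot m\cdot 2^{k+1}$ and $s\cdot m\cdot 2^{k}$.

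The genuine gap is the point you flag in your last paragraph and then dismiss: your two witnesses $0^{Z}1^{Y}$ and $0^{Z}1^{Y+mD}$ have different lengths. The object being lower-bounded is a single program $P_n$ for the \emph{partial function} $\mathtt{PartialMOD^k_n}$ on $\{0,1\}^{n}$ with $n$ fixed; it performs exactly $n$ transitions and is only required to be correct on inputs of length exactly $n$. Stability says the transition pair is the same at every level --- it does not make the program defined, let alone correct, on strings of length $n+mD$, so your second witness is simply not in the domain of $P_n$ and cannot exhibit an error. Once you equalize the lengths, the zero-prefixes must differ, and, as you yourself concede, your argument then only controls the least common multiple of the periods of $A(0)$ and $A(1)$, which is not the claim. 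The paper closes exactly this hole differently: both witnesses are $0^{n-i}1^{i}$ of length $n$ with $i$ ranging over multiples of $D$, and the proof leans on the fact that the limit of $\alpha^{rD}_{acc}$ along multiples of $D$ exists and does not depend on the (now varying) initial distribution $A^{n-rD}(0)v^{0}$, because $A^{D}(1)$ acts as a regular chain on the relevant cyclic subsets. That independence-of-the-initial-state step is precisely what your fixed-$Z$ construction was designed to avoid, and without it (or some substitute for it) your proof does not close.
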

\begin{proof} 
Assume that $D$ is not a multiple of $2^{k+1}.$ After every $D$ steps, the process can be in any set of states containing the
 accepting state and the $D$-th power of $M$ describes a regular
 Markov chain. From the theory of Markov chains, we have
 that there exists an $\alpha_{acc}$ such that $\lim_{r \to
 \infty}\alpha^{r \cdot D}_{acc}=\alpha_{acc}$, where $\alpha_{acc}^i$ represents the probability of process being in accepting state(s) after $i$th step. Hence, for any $\varepsilon>0$,
 it holds that
 \[|\alpha^{r \cdot D}_{acc}-\alpha^{r' \cdot  D}_{acc}|<2\varepsilon\] for some big enough $r, r'$.
Since $D$ is not a multiple of $2^{k+1}$,  it can be represented as $D=m \cdot 2^l$ ($l\leq k$, $m$ is odd). For any odd $s$,  the number $s \cdot D$  is not a multiple of $2^{k+1}$.
$P_n$ is supposed to solve $ \mathtt{PartialMOD^k_n} $ with probability  $1/2+\varepsilon$ so we have
 $\alpha^{s\cdot m\cdot 2^l 2^{k-l+1}}_{acc}\geq 1/2+\varepsilon$ and
 $\alpha^{s\cdot m\cdot 2^l 2^{k-l}}_{acc}\leq 1/2-\varepsilon.$ This contradicts with the inequality above for big enough $s$. 
\qed\end{proof}

\begin{lemma}
	There is a circle of period $t$, where $t$ is a multiple of $2^{k+1}.$
\end{lemma}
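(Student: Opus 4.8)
The plan is to reduce this statement to an elementary fact about how prime powers divide least common multiples. We already know from the previous lemma that $D = \mathrm{lcm}(t_1, \dots, t_l)$ is a multiple of $2^{k+1}$, where $t_1, \dots, t_l$ are the periods of the cyclic subsets $C_1, \dots, C_l$. The crucial observation is that $2^{k+1}$ is a power of the \emph{single} prime $2$, so divisibility of the lcm by $2^{k+1}$ cannot be an ``accident'' distributed across several $t_i$; it must already be witnessed by one of them. This is exactly what lets us upgrade the statement about $D$ into a statement about an individual period.

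First I would recall the standard formula for the prime factorization of a least common multiple: for each prime $p$, the exponent of $p$ in $\mathrm{lcm}(t_1, \dots, t_l)$ equals the maximum of the exponents of $p$ occurring among $t_1, \dots, t_l$. Applying this with $p = 2$, and writing $2^{e_i}$ for the largest power of $2$ dividing $t_i$, the exponent of $2$ in $D$ equals $\max_{1 \le i \le l} e_i$. Since the previous lemma gives $2^{k+1} \mid D$, the exponent of $2$ in $D$ is at least $k+1$, and hence $\max_{1 \le i \le l} e_i \ge k+1$. Therefore there is an index $i$ with $e_i \ge k+1$, i.e. $2^{k+1} \mid t_i$. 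Taking $t = t_i$ together with its cyclic subset $C_i$ produces a circle whose period is a multiple of $2^{k+1}$, which is precisely the claim.

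I do not expect a genuine obstacle here; the argument is short once the prime-power structure of $2^{k+1}$ is exploited. The only point requiring care is exactly this dependence on $2^{k+1}$ being a prime power: for a composite modulus the lcm of several numbers can be divisible by it without any single term being divisible, so the reduction from $D$ to an individual $t_i$ would fail. Finally, I would note how this feeds into Theorem~\ref{th-POBDD1}: since a cyclic chain of period $t$ has at least $t$ states, and here $t$ is a positive multiple of $2^{k+1}$ (so $t \ge 2^{k+1}$), the width of $P_n$ is forced to be at least $2^{k+1}$, contradicting the standing assumption $d < 2^{k+1}$.
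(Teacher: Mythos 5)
Your proposal is correct and takes essentially the same route as the paper: both arguments rest on the fact that $2^{k+1}$ is a prime power, so divisibility of $D=\mathrm{lcm}(t_1,\dots,t_l)$ by $2^{k+1}$ must already be witnessed by a single period $t_i$. Your version, via the max-of-exponents description of the lcm, is in fact a cleaner and more precise rendering of the argument the paper only sketches.
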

\begin{proof}
The proof is followed from the facts that $D$ is a multiple of $2^{k+1}$ that implies existence of $t \in \{t_1, \dots, t_l \}$ such that $t$ is a power of 2. Among such $t$ there must be a multiple of $2^{k+1}$. Otherwise $D$ (the least common multiple of $t_1,\dots,t_l$) can not be a multiply of $2^{k+1}$. 
\qed\end{proof}

Since there is a circle of period $t$ where $t$ is  a multiple of $2^{k+1}$, we have $q\geq 2^{k+1}.$


\section{The Proof of Lemma \ref{leqN}}\label{app-leqN}

Let $n$ be an even integer and  $P_n$ be an OBDD that computes $ \mathtt{NotO_n} $. Assume that  $P_n$ uses the natural order. The proof is similar for any other order. Now we compute the width of level $n/2$.

Let $\Sigma=\{\sigma^i\in\{0,1\}^{n/2}:\sigma^{i}=\underbrace{1 \dots 1}_{i}\underbrace{0 \dots 0}_{n/2-i}, 0\leq i \leq n/2\}$ and $\sigma^{i} $ and $ \sigma^{j}$ be any pair such that $i\neq j$.
Assume that $P_n$ reaches node $v$ by input $\sigma^{i}$ and node $v'$ by input $\sigma^{j}$. 

We show that $v\neq v'$.  We define $\gamma\in \{0,1\}^{n/2}$ such that $\#_0(\gamma)=i$, which also means $\#_1(\gamma)=n/2-i$. 

Let us consider the computation on input $(\sigma^{i}, \gamma)$. Note that $P_n$ reaches $0$ from node $v$  using input $\gamma$ since
\[\#_1(\sigma^{i})+\#_1(\gamma)=n/2=\#_0(\sigma^{i})+\#_0(\gamma).\]

Let us consider the computation on input $(\sigma^{j}, \gamma)$. OBDD  $P_n$ reaches $1$ form node $v'$  using input $\gamma$ since

\[\#_1(\sigma^{j})+\#_1(\gamma)=j + (n/2-i)=n/2+ (j-i)\neq n/2.\]

It means $v\neq v'$. Note that $|\Sigma|=n/2+1$. Hence the width of level $n/2$ is at least $n/2 + 1$, which means that the width of $P_n$ is at least $n/2 + 1$.






\section{The Proof of Lemma \ref{pneq4s}}\label{app-pneq4s}
We construct the NOBDD $P_n$. We use the fingerprinting method given in \cite{Fre79}. 
Let $p_1,\dots,p_r$ be the first $r$ prime numbers satisfying that
\[
p_1 p_2 \cdots p_r>k,
\] where $r$ is the minimal value. Note that $r\leq \log k$ since $p_r\geq 2$.

The NOBDD $P_n$ consists of $r$ parallel parts, each of them corresponds to one of $p$ from $\{p_1,\dots,p_r\}$. In the first step, $P_n$ nondeterministically picks a $p$. Then, this branch counts  the number of $1$'s by modulo $p$. If it is not equal to $(k/2 \mod p)$ at the end, $P_n$ gives the answer of $1$ and $0$ otherwise. We need $p$ nodes for each value from $0$ to $p-1$.

By the Chinese Remainder Theorem\footnote{K.~Ireland and M.~Rosen.
\newblock {A Classical Introduction to Modern Number Theory, 2nd ed.}
\newblock {\em Springer-Verlag}, pages 34--38, 1990.
}, if $\#^k_{1}(\nu)\neq k/2$ then there are at least  one $p_i$ such that $\#^k_{1}(\nu)\neq k/2 \mod p_i$. This means at least one branch gives the answer of $1$.
But if $\#^k_{1}(\nu)= k/2$, then for all $p_i$ we have $\#^k_{1}(\nu)= k/2 \mod p_i$. 
This means each branch gives the answer of $0$.
Hence, $P_n$ computes $\mathtt{NotO_{n}^k}$.

By Prime Number Theorem\footnote{H.~G. Diamond.
\newblock {Elementary methods in the study of the distribution of prime
  numbers}.
\newblock {\em Bulletin of The American Mathematical Society}, 7:553--589,
  1982.
}, we know that $p_r= O(r\ln r)$. And so $p_i \leq O(r\ln r)$ for $1\leq i \leq r$ since $p_i \leq p_r$. This means the width of  $i$-th part is  $p_i\leq O(r\ln r)$.
Since $r\leq \log k$, the width of $P_n$ is $d\leq r \cdot O(r\ln r) =O(\log^2 k \ln\log k)$.

\section{The Proof of Lemma \ref{pneq1}}\label{app-pneq1}

We construct the NOBDD $P_n$ with the natural order. We use the fingerprinting method given in \cite{Fre79}. 
We will use the same notation given in the proof of Lemma \ref{peq1}.
Let $p_1,\dots,p_z$ be the first $z$ prime numbers satisfying that 
\[
p_1 p_2 \cdots p_z>2^{k/4},
\] where $z$ is the minimal value. Note that $r\leq k/4$ since $p_z\geq 2$.

The NOBDD $P_n$ consist of $z$ parallel parts, each of them corresponds to one of $p$ from $\{p_1,\dots,p_z\}$. 
Let us consider input $\nu\in\{0,1\}^n$ and let $\nu^i=(\nu_1,\dots, \nu_i)$ for $1\leq i \leq k$.

In the first step, $P_n$ nondetermenistically picks a $p$. Then, this branch computes $r(\nu)=bin(\alpha(\nu))-bin(\beta(\nu))\mod p)$. Here $bin(\alpha(\nu))$ is the binary represantation of $\alpha(\nu)$.

 For computing $r(\nu)$ in $i$-th step, we should know three numbers: the value of $r(\nu^i)=bin(\alpha(\nu^i))-bin(\beta(\nu^i))\mod p$, the length of $\alpha(\nu^i)$, and the length of $\beta(\nu^i)$. Note that $r(\nu^i)\in\{0,\dots, p-1\}$ and $|\alpha(\nu^i)|,|\beta(\nu^i)|\leq k/2$. If, in last step, $r(\nu^i)$ is not zero or  $|\alpha(\nu)|\neq|\beta(\nu)|$, then this part answers $1$ and $0$ otherwise. We need $p\cdot k^2/4$ nodes for each value of triple $\Big(r(\nu^i), |\alpha(\nu^i)|,|\beta(\nu^i)|\Big)$ and two times more nodes in order to check the value of odd bits for knowing whether the following bit belongs to $\alpha(\nu)$ or to $\beta(\nu)$. This means that the width of this branch is $p\cdot k^2/2$.

By the Chinese Remainder Theorem\footnote{K.~Ireland and M.~Rosen.
\newblock {A Classical Introduction to Modern Number Theory, 2nd ed.}
\newblock {\em Springer-Verlag}, pages 34--38, 1990.
}, if $bin(\alpha(\nu))\neq bin(\beta(\nu))$, then there is at least  one $p_i$ such that $bin(\alpha(\nu))\neq bin(\beta(\nu))\mod p_i$. That is, at least one branch gives the answer of $1$.
If the lengths are different, then each branch gives the answer of $1$.
But if $\alpha(\nu)= \beta(\nu)$, then, for all $p_i$, we have $bin(\alpha(\nu))= bin(\beta(\nu))\mod p_i$ and $|\alpha(\nu)|= |\beta(\nu)|$. 
That is, each branch gives the answer of $0$.
Hence $P_n$ computes $\mathtt{NotEQS^k_{n}}$.

By Prime Number Theorem\footnote{H.~G. Diamond.
\newblock {Elementary methods in the study of the distribution of prime
  numbers}.
\newblock {\em Bulletin of The American Mathematical Society}, 7:553--589,
  1982.
}, we know that $p_z= O(z\ln z)$ and so $p_i \leq O(z\ln z)$ for $1\leq i \leq z$ since $p_i \leq p_z$. This means that the width of  $i$-th part is  $p_i\cdot k^2/2\leq O(z\cdot k^2\ln z)$.
Since $z\leq k/4$, the width of $P_n$ is $d\leq z \cdot O(z\cdot k^2\ln z)  =O(k^4 \log k)$.

\section{The Proof of Lemma \ref{peq1}}\label{app-peq1}
We construct such OBDD $P_n$ which uses the natural order of variables.
Let $\nu\in\{0,1\}^n$ be the input. The main idea is to remember the bits from $\alpha(\nu)$ and $\beta(\nu)$ which have not been compared yet. Suppose that $P_n$ has already read $j$ bits of $\alpha(\nu)$ and $l$ bits of $\beta(\nu)$ at the $2i$-th level.  Each node of the level is associated with the value of string $c=(c_1,\dots,c_r)$ of bits which are not compared yet. For example, if $j>l$, then $c=(\alpha_{l+1}(\nu),\dots,\alpha_j(\nu))$; if $j<l$, then $c=(\beta_{j+1}(\nu),\dots,\beta_l(\nu))$; and, if $j=l$, then $c$ is empty string. 

Note that $c$ always contains the bits either from  $\alpha(\nu)$ or  from $\beta(\nu)$ but never both.
If some nodes $c$ contain the bits from $\alpha(\nu)$ and $P_n$ reads such a bit from $\alpha(\nu)$, then it adds this bit to $c$. Otherwise this bit belongs to $\beta(\nu)$ and $P_n$ compares it with the first bit of $c$. If both bits are equal, then $P_n$ removes this bit from $c$ and rejects the input otherwise. 

More formally, let us consider $2i$-th level. It contains four groups of nodes.  First of them are the nodes associated with $c$ that contains the bits from $\alpha(\nu)$. The second ones are the nodes for $c$ which contains bits from $\beta(\nu)$. The third one contains  only one ``equals'' node for empty $c$, and, the fourth one contains only one ``rejecting'' node.

Let $v_c$ be a node from the first group and $c=(c_1,\dots,c_r)$ be the string associated with $v_c$ such that $c$ contains the bits from $\alpha(\nu)$. At the $2i$-th level, $P_n$ reads a marker bit $\nu_{2i+1}$. If $\nu_{2i+1}=0$, that means the next value bit $\nu_{2i+2}$ belongs to $\alpha(\nu)$ and then  $P_n$ stores bit $\nu_{2i+2}$ at the level $2i+1$  and goes to the node corresponding to $c'=(c_1,\dots,c_r,\nu_{2i+2})$. Otherwise, the next value bit $\nu_{2i+2}$ belongs to $\beta(\nu)$ and $P_n$ compares it with $c_1$. If these bits are the same, then $P_n$ goes to the node corresponding to $c''=(c_2,\dots,c_r)$ or to ``equals'' node if $r=1$, and goes to ``rejecting'' node if the bits are different. If $c$ is empty, then $P_n$ goes to the node of the first or the second group associated to $c=(\nu_{2i+2})$ which contains bit from $\alpha$ or from $\beta$. 
If length of $c$ is greater than $k/4$, then $P_n$ goes to ``rejecting'' node. 

For the second group of nodes, $P_n$ works by the same way but the string $c$ stores the bits from $\beta(\nu)$.
$P_n$ gives the answer $1$ iff it reaches the ``equals'' node at the last level.

Now we compute the width of $P_n$.  At the $2i$-th level, the first two groups of nodes contain nodes for each value of $c$ both for $\alpha(\nu)$ and for $\beta(\nu)$. The third and forth groups contain ``equal'' and ``rejecting'' nodes, respectively. The width of such level is

\[d= (2+4+\dots +2^{k/4}) + (2+4+\dots +2^{k/4}) + 2=\]\[=(2^{k/4+1}-2) + (2^{k/4+1}-2) + 2=4\cdot 2^{k/4}-2.\]

The level $2i+1$ have twice more nodes for the first three groups since $P_n$ needs to remember the value of marker bit, which indicates the next bit belongs to $\alpha(\nu)$ or $\beta(\nu)$. So, the widths of these levels are $8\cdot 2^{k/4}-5$ and so the width of $P_n$ is $8\cdot 2^{k/4}-5$.

Note that OBDD is particular case of NOBDD and the same result is followed for NOBDDs.

\section{The Proof of Lemma \ref{peq2}}
\label{app-peq2}

Let $P_n$ be an OBDD with order $\pi=(j_1,\dots,j_n)$ that computes $ \mathtt{EQS^k_{n}} $ and $l$ be the number of level such that the $l$-th level of $P_n$ has already read exactly $k/4$ value bits from $X_K=\{x_1,\dots,x_k\}$. 

Let us consider partition of variables $(\{x_{j_1},\dots,x_{j_l}\},\{x_{j_{l+1}},\dots,x_{j_n}\})=(X_A,X_B)$ and input $\nu=(\sigma,\gamma)$ with respect to this partition. Let set $X_V$ is all value bits from $X_A\cap X_K$ and set $X_M=(X_A\cap X_K)\backslash X_V$ is all marker bits from $X_A\cap X_K$.

Let set $\Sigma=\{\sigma\in\{0,1\}^l:$ marker bits for variables from $X_M$ fixed such that value bits for variables from $X_V$ belongs to $\alpha(\nu)$ and other value bit from $X_K\backslash X_V$ belongs to $\beta(\nu)\}$. Note that $|\Sigma|=2^{k/4}$. 

By the same way as in Lemma \ref{peq1s} we can show that each $\sigma\in \Sigma$ reaches different nodes on  level $l$, and therefore the width of $l$-th level at least $2^{k/4}$.

\end{document}